\documentclass[sigconf,nonacm]{acmart}

\usepackage{dsfont}
\usepackage{algorithm}
\usepackage{algpseudocode}
\usepackage{subfig}
\usepackage{stfloats}
\newtheorem{assumption}{Assumption}
\newtheorem{remark}{Remark}
\definecolor{brickred}{cmyk}{0,0.89,0.94,0.28}
\definecolor{goldenrod}{cmyk}{0,0.10,0.84,0}
\definecolor{purple}{cmyk}{0.45,0.86,0,0}
\definecolor{rawsienna}{cmyk}{0,0.72,1,0.45}
\definecolor{olivegreen}{cmyk}{0.64,0,0.95,0.40}
\definecolor{peach}{cmyk}{0,0.5,0.7,0}
\definecolor{darkolive}{rgb}{0.,0.4,0.}
\colorlet{grey}{gray!40}

\def\tcbr{\textcolor{brickred}}

\AtBeginDocument{%
  }
\settopmatter{printacmref=false} 


\setcopyright{none} 
\acmDOI{XXXXXXX.XXXXXXX}

\makeatletter
\renewcommand\@formatdoi[1]{\ignorespaces}
\makeatother
\acmConference[xxxx]{}{xxxx}{xxxx}


\acmISBN{xxx-x-xxxx-XXXX-X/xxxx/xx}




\begin{document}

\title{$(\ell,\delta)$-Diversity: Linkage-Robustness via a Composition Theorem}

\author{V.~Arvind Rameshwar}
\affiliation{%
  \institution{India Urban Data Exchange Program Unit}
  \city{Indian Institute of Science, Bengaluru}
  \country{India}
}
\authornote{Both authors contributed equally to this research.}
\email{arvind.rameshwar@gmail.com}
\author{Anshoo Tandon}
\authornotemark[1]
\affiliation{%
  \institution{India Urban Data Exchange Program Unit}
  \city{Indian Institute of Science, Bengaluru}
  \country{India}
}
\email{anshoo.tandon@gmail.com}








\renewcommand{\shortauthors}{Rameshwar and Tandon}

\begin{abstract}
In this paper, we consider the problem of degradation of anonymity upon linkages of anonymized datasets. We work in the setting where an adversary links together $t\geq 2$ anonymized datasets in which a user of interest participates, based on the user's known quasi-identifiers, which motivates the use of $\ell$-diversity as the notion of dataset anonymity.
We first argue that in the worst case, such linkage attacks can reveal the \emph{exact} sensitive attribute of the user, even when each dataset respects $\ell$-diversity, for moderately {large} values of $\ell$. This issue motivates our definition of (approximate)  $(\ell,\delta)$-diversity -- a parallel of (approximate) $(\epsilon,\delta)$-differential privacy (DP) -- which simply requires that a dataset respect $\ell$-diversity, with high probability. We then present a mechanism for achieving $(\ell,\delta)$-diversity, in the setting of independent and identically distributed samples. Next, we establish bounds on the degradation of $(\ell,\delta)$-diversity, via a simple ``composition theorem,'' similar in spirit to those in the DP literature, thereby showing that approximate diversity, unlike standard diversity, is roughly preserved upon linkage. Finally, we describe simple algorithms for maximizing utility, measured in terms of the number of anonymized ``equivalence classes,'' and derive explicit lower bounds on the utility, for special sample distributions.
\end{abstract}

\begin{CCSXML}
	<ccs2012>
	<concept>
	<concept_id>10002978.10003018.10003019</concept_id>
	<concept_desc>Security and privacy~Data anonymization and sanitization</concept_desc>
	<concept_significance>500</concept_significance>
	</concept>
	</ccs2012>
\end{CCSXML}

\ccsdesc[500]{Security and privacy~Data anonymization and sanitization}

\keywords{Database anonymity, $\ell$-diversity, Composition theorems}



\maketitle

\section{Introduction}
The past few decades have seen an explosion in the quantity of data being collected, stored, and disseminated. This tremendous increase in data availability has predictably raised concerns about the privacy of the data samples of individual users. Several works \cite{sweeney,dinurnissim,narayanan,homer,taxi} have demonstrated that the release of even simple functions of a dataset that is not publicly available can leak sensitive information about the identities of users present in the datasets. In many settings of interest, data owners such as hospitals may be incentivized to share \emph{microdata}, via a release of \emph{entire datasets} with clients who desire to carry out statistical analyses on the data records. 

Clearly, in such settings, a simple suppression of user identities is insufficient for protecting the sensitive information of users, since an adversary can make use of \emph{quasi-identifiers} present in records contributed by a user, to obtain his/her sensitive attribute. Towards alleviating data leakage via such reconstruction attacks, several notions of data anonymity were introduced \cite{sweeney,k-anon1,k-anon2,k-anon3,l-div,t-close}. Popular such notions of anonymity include $k$-anonymity \cite{sweeney,k-anon1,k-anon2,k-anon3}, $\ell$-diversity \cite{l-div}, and $t$-closeness \cite{t-close}. Informally, $k$-anonymity enforces the condition that for every anonymized quasi-identifier (or ``equivalence class'') in a dataset, there must exist at least $k\geq 2$ data records. Likewise, $\ell$-diversity mandates that there must be at least $\ell$ \emph{distinct} sensitive attributes in an equivalence class, and $t$-closeness requires that the distribution of sensitive attributes in any equivalence class be close (measured by $t$) to that in the whole anonymized dataset. Typically, such anonymity notions are enforced by \emph{suppressing}, or dropping, selected quasi-identifiers, or by \emph{generalizing}, or binning, quasi-identifiers together to form so-called \emph{equivalence classes} that partition the space of quasi-identifiers. An example of a dataset that is anonymized (via generalization) to respect $2$-anonymity and $2$-diversity is shown in Table \ref{tab:anon}; the sensitive attribute here is the disease each user suffers from. We refer the reader to several works such as \cite{complexity-k-anon,k-anon-approx-1,k-anon-approx-2,k-anon-approx-3, k-anon-heuristic-1,mondrian} and references therein for results on the computational complexity of, and approximation algorithms and heuristics for, $k$-anonymity.

\begin{table}
	\centering
	\begin{tabular}{||c|c|c||}
		\hline
		Gender & Postal Code & \tcbr{Disease}\\
		\hline
		\hline
		Male & 560012 & \tcbr{Heart disease}\\
		Male & 560011 & \tcbr{Lung infection}\\
		Female & 560010 & \tcbr{Osteoporosis}\\
		Female & 560010 & \tcbr{Cervical cancer}\\
		Female & 560010 & \tcbr{Osteoporosis}\\
		\hline
	\end{tabular}
	\caption*{(a) Raw dataset}
\vspace{1em}
	\begin{tabular}{||c|c||}
	\hline
	(Gender, Postal Code) & \tcbr{Disease}\\
	\hline
	\hline
	(Male, 560012) or (Male, 560011) & \begin{tabular}{@{}c@{}}\tcbr{Heart disease} \\ \tcbr{Lung infection}\end{tabular}\\
	\hline
	(Female, 560010) & \begin{tabular}{@{}c@{}}\tcbr{Osteoporosis}\\ \tcbr{Osteoporosis}\\ \tcbr{Cervical cancer}\end{tabular}\\
	\hline
\end{tabular}
\caption*{(b) Anonymized dataset}
\caption{An illustration of a dataset that has been anonymized via generalization of quasi-identifiers, to respect $2$-anonymity and $2$-diversity}
\label{tab:anon}
\end{table}

As an illustration of the frailities of such standard anonymity notions, consider the instance of two hospitals in sufficiently close neighbourhoods that store medical records of patients who seek its diagnostic services.
Each hospital \emph{independently} carries out generalization or suppresion of quasi-identifiers of records, in an attempt to maintain patient anonymity, during its releases to clients for analyses. Now, consider an adversary (or malicious client) that obtains the \emph{anonymized} datasets from \emph{both} hospitals on a fixed day. The adversary, in addition, is aware of the quasi-identifiers of a specific user who lies in both datasets, and seeks to extract the sensitive attribute of this user. Clearly, an appropriate notion of anonymity in the face of such a threat model is $\ell$-diversity; indeed, a \emph{single} dataset that respects $k$-anonymity may directly reveal the sensitive attribute of the user, if all sensitive attributes in the equivalence class of interest are equal. {It can be seen that a similar drawback precludes the use of $t$-closeness as well, since datasets wherein each equivalence class contains (potentially different counts of) the \emph{same}, \emph{common} sensitive attribute value, obeys $0$-closeness, but this sensitive attribute value can immediately be extracted from such an ``anonymized'' dataset.} We mention that another instance where attacks of this nature are possible is the setting of datasets of voter records across constituencies, collected by several different polling/news agencies (see \cite{loksabha} for examples of such datasets), where the sensitive attribute of a user record is the candidate the user voted for.

In this work, we first establish, via rigorous arguments, the worst-case degradation of  $\ell$-diversity upon the ``linkage''\footnote{The term ``linkage'' used in this paper has a slightly different interpretation compared to the usage in previous work \cite{k-anon1,k-anon3}, in that prior literature focussed on the adversary's ability to ``link'' user identities (available via public datasets) to sensitive information (present in non-public datasets). In this work, our focus is on the setting where a user contributes \emph{identical} records to two different datasets, thereby compromising its sensitive attributes, when the datasets are joined together based on the common, known quasi-identifiers.} of anonymized datasets. Importantly, we demonstrate that the anonymity parameters can degrade enormously upon linkage, in the worst-case, even when the original parameters are moderately large -- see Table \ref{tab:linkage} for an example of such a linkage attack.
Observe that each of $\overline{\mathcal{D}}_1$ and $\overline{\mathcal{D}}_2$ respects both $2$-diversity, but an adversary that is in possession of both anonymized datasets and the quasi-identifier (Female, 560010) of a user of interest, who lies in both datasets, can directly obtain the disease "cervical cancer" that she suffers from.

 This then motivates our definition of \emph{approximate} $(\ell,\delta)$-diversity, which requires that $\ell$-diversity be satisfied with probabililty at least $1-\delta$, for a fixed $\delta\in (0,1)$. We mention that our definition of approximate anonymity is in the same vein as that of (approximate) $(\epsilon,\delta)$-differential privacy (DP) (see, e.g., \cite[Def. 2.4]{dworkroth}, \cite[Def. 1.4]{vadhan2017}), which can be interpreted as enforcing $\epsilon$-DP with probability at least $1-\delta$.\footnote{We mention in this context that there also exists literature \cite{k-anon-dp} on pre-processing datasets via a sampling procedure, to ensure that a $k$-anonymization algorithm performed thereafter also satisfies DP.} Furthermore, we note that our definition of $(\ell,\delta)$-diversity, introduced in the context of dataset linkages, is in spirit similar to the definition of ``probabilistic $k$-anonymity'' introduced in \cite{prob-anon}, which also allows for $k$-anonymity failure with probability at most $\beta$, for some $\beta>0$ (see Definition 4 therein).

With our definition in place, we establish a simple mechanism (or algorithm) for achieving $(\ell,\delta)$-diversity in a \emph{sample-independent} manner, for the setting when the samples are independent and identically distributed (i.i.d.), via a bound on the size of the dataset required for $(\ell,\delta)$-diversity. We mention that a dataset size bound that is similar in spirit was derived for the setting of ``probabilistic $k$-anonymity,'' in \cite[Thm. 4]{prob-anon}; however, the analysis therein was not motivated by the question of dataset linkages, and their work does not provide any \emph{explicit} algorithms for achieving ``probabilistic $k$-anonymity,'' or discuss the linkage-resilience properties of the anonymity notion. In our work, we show via a simple \emph{composition theorem} (similar to those well-known in the DP literature, see, e.g., \cite[Sec. 3.5]{dworkroth}), that for the case of i.i.d. samples, $(\ell,\delta)$-diversity thus achieved is roughly preserved upon dataset linkage, unlike ``pure" $\ell$-diversity. Finally, we discuss a simple, yet optimal, algorithm for maximizing the utility of our mechanism for achieving $(\ell,\delta)$-diversity, in the natural setting where we require equivalence classes to contain \emph{contiguous} quasi-identifiers, according to some fixed, total order on quasi-identifiers. We mention that in this work, we measure the utility of an anonymization mechanism as simply as the number of equivalence classes constructed. We then present explicit lower bounds on the utility, for special sample distributions where the sensitive attributes are independent of the quasi-identifiers.

The paper is organized as follows: Section \ref{sec:prelim} discusses some background on anonymity and sets down the problem formulation; Section \ref{sec:worst-case} argues that standard $\ell$-diversity can degrade enormously upon linkage, in the worst-case; Section \ref{sec:approx} presents the definition of (approximate) $(\ell,\delta)$-diversity, presents an explicit mechanism for i.i.d. samples, and discusses generalization strategies for maximizing utility and lower bounds thereon; and Section \ref{sec:numerics} provides experiments on the number of samples required for guaranteeing $(\ell,\delta)$-diversity via our mechanism, for a setting of practical interest. The paper is then concluded in Section \ref{sec:conclusion}, with some directions for future research.
	\begin{table}
		\centering
		\begin{tabular}{||c|c||}
		\hline
		(Gender, Postal Code) & \tcbr{Disease}\\
		\hline
		\hline
		(Male, 560012) or (Male, 560011) & \begin{tabular}{@{}c@{}}\tcbr{Heart disease} \\ \tcbr{Lung infection}\end{tabular}\\
		\hline
		(Female, 560010) & \begin{tabular}{@{}c@{}}\tcbr{Osteoporosis}\\  \tcbr{Cervical cancer}\end{tabular}\\
		\hline
	\end{tabular}
	\caption*{Anonymized dataset $\overline{\mathcal{D}}_1$}
	\vspace{1em}
	\begin{tabular}{||c|c||}
		\hline
		(Gender, Postal Code) & \tcbr{Disease}\\
		\hline
		\hline
		(Female, 560008) & \begin{tabular}{@{}c@{}}\tcbr{Irritable bowel syndrome} \\ \tcbr{Lung infection}\end{tabular}\\
		\hline
		(Female, 560010) & \begin{tabular}{@{}c@{}}\tcbr{Cervical cancer} \\ \tcbr{Amoebic dysentry}\end{tabular}\\
		\hline
	\end{tabular}
\caption*{Anonymized dataset $\overline{\mathcal{D}}_2$}
\caption{Anonymized datasets that can be linked based on the equivalence class (Female, 560010) to reveal that the \emph{only} sensitive attribute with this equivalence class is the disease \tcbr{cervical cancer}}
\label{tab:linkage}
	\end{table}

\section{Notation and Preliminaries}
\label{sec:prelim}
\subsection{Notation}
The notation $\mathbb{N}$ denotes the set of positive natural numbers. For $n\in \mathbb{N}$, the notation $[n]$ denotes the set $\{1,2,\ldots,n\}$. Random variables are denoted by upper-case letters, e.g., $X, Y$, and their realizations by small letters, e.g., $x,y$. For a discrete distribution $P$ on an alphabet $\mathcal{X}$, we define its support supp$(P):= \{x\in \mathcal{X}: P(x)>0\}$. Given a real-valued vector $\mathbf{a}= (a_1,a_2,\ldots,a_n)$, we denote by $w(\mathbf{a})$ its (Hamming) weight, i.e., $w(\mathbf{a})$ is the number of non-zero coordinates in $\mathbf{a}$. The notation $\mathbf{1}$ denotes the all-ones vector, whose length can be inferred from the context. The indicator function $\mathds{1}\{A\}$ equals $1$, if the property $A$ is true, and $0$, otherwise. For positive integers $a, n$, we define {$(a)_n:= \text{mod}(a-1,n)+1$}, where mod$(a,n)$ returns the value $r\in \{0,1,\ldots,n-1\}$ such that $a = tn+r$, for some $t\in \mathbb{N}$. We use `$\ln$' to refer to the natural logarithm. Given sequences $(a_n)_{n\geq 1}, (b_n)_{n\geq 1}$ of positive reals, we say that $a_n = O(b_n)$ if there exists $C>0$ such that for large enough $n$, we have $a_n\leq C\cdot b_n$.

\subsection{Anonymity: Formal Definitions}
\label{sec:background}
Consider a dataset $\mathcal{D}$ consisting of records of user identities, non-sensitive quasi-identifiers, and sensitive attributes. More precisely, $\mathcal{D}$ is the collection $\{(i,q_{1,i},q_{2,i},\ldots,q_{m,i},s_i):\ i \in [N]\}$, where $N$ is the total number of users and each user $i \in [N]$ contributes quasi-identifiers $\mathbf{q}_i:=(q_{1,i},q_{2,i},\ldots,q_{m,i})\in \mathcal{Q}$, for some fixed $m\geq 1$. Further, $s_i \in \mathcal{S}$ is the sensitive attribute corresponding to user $i$. We assume that the alphabets $\mathcal{Q}$ and $\mathcal{S}$ of quasi-identifiers and sensitive attributes, respectively, are finite and independent of $N$. We first recollect the definition of $\ell$-diversity. For $s\in \mathcal{S}$, let $n^{(\mathbf{q})}(s):=|\{i\in [N]:\ \mathbf{q}_i = \mathbf{q}\text{ and } s_i = s\}$.


\begin{definition}[$\ell$-diversity \cite{l-div}]
	\label{def:l-div}
	A dataset $\mathcal{D}$ satisfies $\ell$-diversity, if $\ell\in [|\mathcal{S}|]$ is the largest integer such that for any vector $\mathbf{q}\in \mathcal{Q}$ with $\sum_{s \in \mathcal{S}} n^{(\mathbf{q})}(s)>0$, we have 
	\[
	|\{s\in \mathcal{S}:\ n^{(\mathbf{q})}(s)>0\}|\geq \ell.
	\]
\end{definition}
In fact, following arguments made in \cite{l-div}, one can show that $\ell$-diversity arises as a natural notion of anonymity from operational, information-theoretic arguments about the optimal strategy of an adversary, which possesses \emph{limited} information---only the unique sensitive attribute values corresponding to each quasi identifier $\mathbf{q}'$---about the dataset $\mathcal{D}$ and a quasi-identifier $\mathbf{q}$ of a user whose sensitive attribute he/she wishes to determine. 
%

In practice, $\ell$-diversity is achieved by first dropping the user identities, and then carrying out either \emph{suppression}, or dropping of selected quasi-identifiers of selected users, or via \emph{generalization}, or combining/binning of quasi-identifiers. In this work, we restrict our attention to $\ell$-diversity achieved via generalization.

\subsection{Problem Formulation}
\label{sec:problem}
Consider datasets $\mathcal{D}_1, \mathcal{D}_2, \ldots, \mathcal{D}_t$ of equal size $N$, that have been anonymized to obey $\ell$-diversity, giving rise to anonymized datasets $\overline{\mathcal{D}}_1,\overline{\mathcal{D}}_2,\ldots,\overline{\mathcal{D}}_t$. We make the following assumption on the users present in the datasets:
\begin{assumption}
	We assume that at least $1$ user is common among the datasets $\left\{\mathcal{D}_j: j\in [t]\right\}$.
\end{assumption}
Observe that in the absence of user indentities, a sufficient statistic for the representation of an anonymized dataset $\overline{\mathcal{D}}$ is simply the collection of counts $\{n^{(\overline{\mathbf{q}})}(s)\}$ corresponding to an \emph{anonymized} quasi-identifier $\overline{\mathbf{q}}$ (or ``equivalence class''), obtained potentially via generalization. In what follows, we hence abuse notation and treat $\overline{\mathbf{q}}$ as a set of quasi-identifiers $\mathbf{q}\in \mathcal{Q}$. {We assume, as is natural, that for distinct equivalence classes $\overline{\mathbf{q}}, \overline{\mathbf{q}}'$ in any anonymized dataset, we have $\overline{\mathbf{q}}\cap \overline{\mathbf{q}}' = \emptyset$.} Let $\overline{\mathcal{Q}}$ denote the collection of equivalence classes. Further, 
for any $s\in \mathcal{S}$, let $n^{(\overline{\mathbf{q}})}(s):= \sum_{\mathbf{q}\in \overline{\mathbf{q}}} n^{(\mathbf{q})}(s)$. The definition of $\ell$-diversity (and of $(\ell,\delta)$-diversity, in Definition \ref{def:l-delta} in Section \ref{sec:approx}) then extends naturally to anonymized datasets too, by simply replacing $\mathcal{D}$, $\mathbf{q}$, and $\mathcal{Q}$, in Definition \ref{def:l-div}, with $\overline{\mathcal{D}}$, $\overline{\mathbf{q}}$, and $\overline{\mathcal{Q}}$, respectively.

Now, consider an adversary who has access to the anonymized datasets $\left\{\overline{\mathcal{D}}_j: j\in [t]\right\}$, and possesses knowledge of the quasi-identifier $\mathbf{q} = \mathbf{q}_i$ of a specific user $i$ who contributes to all of these datasets. The adversary first identifies the equivalence classes $\overline{\mathbf{q}}_j$ in $\overline{\mathcal{D}}_j$, $j\in [t]$, which contain $\mathbf{q}$. 
Define ${n}^{(\mathbf{q})}_{[t]}(s):=\min\{{n}^{(\overline{\mathbf{q}}_j)}(s): j\in [t]\}$. The adversary then constructs the following \emph{linkage}:
\[
\mathcal{L}^{(\mathbf{q})}_{[t]}:=\{(s,{n}^{(\mathbf{q})}_{[t]}(s)): {n}^{(\mathbf{q})}_{[t]}(s) > 0\}.
\]

Given the above linkage, it is possible to construct the \emph{post-linkage dataset} $\overline{\mathcal{D}}^{(\mathbf{q})}_{[t]}$, which consists of $n^{(\mathbf{q})}_{[t]}(s)$ records for each sensitive attribute $s\in \mathcal{S}$ in the linkage $\mathcal{L}^{(\mathbf{q})}_{[t]}$. The samples in the post-linkage dataset are hence identified by the equivalence class $\bigcap_{j=1}^t \overline{\mathbf{q}}_j$. 

%

In the rest of the paper, our focus will be on the anonymity properties of $\overline{\mathcal{D}}^{(\mathbf{q})}_{[t]}$, ideally requiring that if $\overline{\mathcal{D}}_j$, $j\in [t]$, respect $\ell$-diversity with a moderately large anonymity parameter $\ell$, then $\overline{\mathcal{D}}^{(\mathbf{q})}_{[t]}$ also satisfy the same notion of anonymity, with an anonymity parameter not too degraded.
\section{Worst-Case Degradation of $\ell$-Diversity Upon Linkage}
\label{sec:worst-case}
In this section, we shall argue that the dataset $\overline{\mathcal{D}}^{(\mathbf{q})}_{[t]}$, in the worst-case, respects $\ell$-diversity for a value of $\ell$ that is much smaller than the common anonymity parameter of $\left\{\overline{\mathcal{D}}_j: j\in [t]\right\}$, so long as the alphabet size of the sensitive attributes is at least $3$.

More formally, we are interested in calculating
\begin{equation}
	\label{eq:l-div-worst}
\ell_{[t]}:= \min_{\overline{\mathcal{D}}_1, \ldots, \overline{\mathcal{D}}_t} \{\widehat{\ell}:\ \overline{\mathcal{D}}^{(\mathbf{q})}_{[t]}\text{ obeys $\widehat{\ell}$-diversity}\}.
\end{equation}
{We assume, for simplicity of exposition, that $t$ divides $|\mathcal{S}|-1$; let $L:=|\mathcal{S}|-1$.} In particular, we are interested in the setting where $2\leq t\leq |\mathcal{S}|-1$, and we hence restrict our attention to the case where $|\mathcal{S}|\geq 3$. The following proposition then holds.
\begin{proposition}
	\label{prop:l-div-worst}
	For $t\geq 2$, let $t$ divide $L$. Then, for any $\ell\leq [|\mathcal{S}|]$, we have that $\ell_{[t]} = 1$, if $\ell\leq L\cdot\left(\frac{t-1}{t}\right)+1$, and for $\ell>L\cdot\left(\frac{t-1}{t}\right)+1$, we have
	\[
	\ell_{[t]} = L+1-(L-\ell+1)t.
	\]
\end{proposition}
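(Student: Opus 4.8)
The plan is to reduce the computation of $\ell_{[t]}$ in \eqref{eq:l-div-worst} to a finite combinatorial extremal problem about intersections of subsets of $\mathcal{S}$, and then to solve that problem by a complementation argument (a union bound together with the nonempty-intersection constraint) matched by two explicit configurations. First, observe that the post-linkage dataset $\overline{\mathcal{D}}^{(\mathbf{q})}_{[t]}$ has a single equivalence class, namely $\bigcap_{j=1}^t \overline{\mathbf{q}}_j$, which is nonempty since it contains $\mathbf{q}$; hence by Definition \ref{def:l-div} it obeys $\widehat{\ell}$-diversity precisely for $\widehat{\ell} = |\{s\in\mathcal{S}\colon n^{(\mathbf{q})}_{[t]}(s)>0\}|$. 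Since record counts are nonnegative integers, $n^{(\mathbf{q})}_{[t]}(s) = \min_{j\in[t]} n^{(\overline{\mathbf{q}}_j)}(s)$ is positive exactly when $n^{(\overline{\mathbf{q}}_j)}(s)>0$ for every $j$. Writing $S_j := \{s\in\mathcal{S}\colon n^{(\overline{\mathbf{q}}_j)}(s)>0\}$ for the set of sensitive values present in the relevant equivalence class of $\overline{\mathcal{D}}_j$, this gives $\widehat{\ell} = \bigl|\bigcap_{j=1}^t S_j\bigr|$, so that $\ell_{[t]}$ equals the minimum of $\bigl|\bigcap_{j=1}^t S_j\bigr|$ over all families $(S_1,\dots,S_t)$ realizable from $\ell$-diverse anonymized datasets.

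Next I would pin down which families are realizable. For the ``$\geq$'' direction: each $\overline{\mathcal{D}}_j$ obeys $\ell$-diversity and $\overline{\mathbf{q}}_j$ is one of its (nonempty) equivalence classes, so $|S_j|\geq\ell$; and by the standing assumption that some user is common to all $t$ datasets, that user contributes the same record $(\mathbf{q},s_i)$ to each $\mathcal{D}_j$, whence $s_i\in\bigcap_{j}S_j$ and the intersection is nonempty. For the converse: given any $S_1,\dots,S_t\subseteq\mathcal{S}$ with $|S_j|\geq\ell$ and $\bigcap_j S_j\ni s^\star$ for some $s^\star$, let $\overline{\mathcal{D}}_j$ consist of the single equivalence class $\overline{\mathbf{q}}_j=\mathcal{Q}$ populated with one record per sensitive value of $S_j$ (padding with extra $s^\star$-records to equalize dataset sizes, if desired) and designate one $s^\star$-record in each dataset as belonging to a common user; then $\overline{\mathcal{D}}_j$ obeys $\ell$-diversity. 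Consequently $\ell_{[t]} = \min\bigl\{ \bigl|\bigcap_{j=1}^t S_j\bigr| \colon S_j\subseteq\mathcal{S},\ |S_j|\geq\ell,\ \bigcap_{j=1}^t S_j\neq\emptyset \bigr\}$.

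Finally I would solve this by passing to complements. With $L+1 := |\mathcal{S}|$ and $S_j^c := \mathcal{S}\setminus S_j$ we have $\bigl|\bigcap_j S_j\bigr| = (L+1) - \bigl|\bigcup_j S_j^c\bigr|$ and $|S_j^c|\leq L+1-\ell$. A union bound gives $\bigl|\bigcup_j S_j^c\bigr|\leq t(L+1-\ell)$, while $\bigcap_j S_j\neq\emptyset$ forces $\bigl|\bigcup_j S_j^c\bigr|\leq L$; hence $\bigl|\bigcap_j S_j\bigr|\geq (L+1)-\min\{t(L+1-\ell),\,L\} = \max\{\,L+1-(L-\ell+1)t,\ 1\,\}$, which lower-bounds $\ell_{[t]}$. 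Tightness: if $t(L+1-\ell)\geq L$, which by rearranging is exactly $\ell\leq L\cdot\frac{t-1}{t}+1$, choose sets $S_j^c$ of size $L+1-\ell$ whose union is $\mathcal{S}$ minus one element (feasible since their total size is at least $L$), giving $\bigl|\bigcap_j S_j\bigr|=1$; and if $t(L+1-\ell)<L$, i.e. $\ell> L\cdot\frac{t-1}{t}+1$, choose the $S_j^c$ pairwise disjoint (feasible since their total size is less than $L+1$), giving $\bigl|\bigcup_j S_j^c\bigr| = t(L+1-\ell)$ and $\bigl|\bigcap_j S_j\bigr| = L+1-(L-\ell+1)t$. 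Since $L+1-(L-\ell+1)t\leq 1$ in the first regime and $>1$ in the second, these constructions meet the lower bound, yielding $\ell_{[t]}=1$ when $\ell\leq L\cdot\frac{t-1}{t}+1$ and $\ell_{[t]}=L+1-(L-\ell+1)t$ otherwise.

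I expect the real work here to be bookkeeping rather than a conceptual obstacle: verifying that the threshold inequality $t(L+1-\ell)\geq L$ is indeed equivalent to $\ell\leq L\cdot\frac{t-1}{t}+1$ (where the hypothesis $t\mid L$ makes this boundary land on an integer, so the two regimes partition the admissible values of $\ell$ cleanly), checking that the two set configurations actually fit inside $\mathcal{S}$ while keeping $\bigcap_j S_j\neq\emptyset$, and noting that it is precisely this nonempty-intersection constraint --- forced by the common user of the standing assumption --- that produces the $\max\{\cdot,1\}$ and hence prevents $\ell_{[t]}$ from ever reaching $0$.
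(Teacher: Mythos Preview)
Your proof is correct and follows essentially the same approach as the paper. The paper encodes the sets $S_j$ as characteristic vectors $\iota_j$, strips off the common coordinate to form a $t\times L$ binary matrix $M$, and then counts zeros (your union bound on $|\bigcup_j S_j^c|$ is exactly the paper's inequality $L-\ell_{[t]}+1\leq t(L-\ell+1)$), while its explicit cyclic-shift vectors $\mathbf{x}_j$ in \eqref{eq:x} play the role of your ``cover all but one element'' and ``pairwise disjoint complements'' constructions; your set-complement formulation is notationally cleaner and handles both regimes in one line via $\max\{L+1-(L-\ell+1)t,\,1\}$, and you are slightly more careful than the paper in arguing realizability of arbitrary $(S_1,\dots,S_t)$ by actual datasets, but the substance is the same.
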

The proof of Proposition \ref{prop:l-div-worst} proceeds via a straightforward reduction of the problem in \eqref{eq:l-div-worst} to a problem involving the intersections of supports of binary strings of fixed Hamming weight. Indeed, given the equivalence classes $\overline{\mathbf{q}}_j$ in the anonymized datasets $\overline{\mathcal{D}}_j$, one can define the characteristic vector
\[
\iota_j:= \left(\mathds{1}\{n^{(\overline{\mathbf{q}}_j)}(s)>0\}: s\in \mathcal{S}\right),
\]
for $j\in [t]$. Clearly, we have that $w(\iota_j) \geq \ell$, for all $j\in [t]$. Furthermore, if the post-linkage dataset $\overline{\mathcal{D}}_{[t]}$ obeys $\widehat{\ell}$-diversity, then the vector $\iota_{[t]}$, with coordinates
\[
\iota_{[t],s}:=
\begin{cases}
	1,\ \text{if $\iota_{j,s} = 1$, for all $j\in [t]$},\\
	0,\ \text{otherwise},
\end{cases}
\]
is such that $w(\iota_{[t]}) = \widehat{\ell}$. In other words, $\overline{\mathcal{D}}_{[t]}$ obeys $\widehat{\ell}$-diversity, where $\widehat{\ell}$ is the cardinality of the intersection of the supports of the binary strings $\iota_j$, $j\in [t]$. We then have from \eqref{eq:l-div-worst} that
\begin{equation}
	\label{eq:l-div-worst-reduce}
\ell_{[t]}:= \min_{\iota_j, j\in [t]} w(\iota_{[t]}).
\end{equation}
Now, observe that $w(\iota_{[t]})$ cannot decrease if $w(\iota_j)$, for some $j\in [t]$, increases. It hence suffices for us to focus on the setting where $w(\iota_j) = \ell$, for all $j\in [t]$. Further, without loss of generality, assume that $\iota_{1,1} = \ldots = \iota_{t,1} = 1$, since the equivalence classes $\overline{\mathbf{q}}_j$, $j\in [t]$, agree on the contribution of at least one user.

 We define $M\in \{0,1\}^{t\times L}$ as the array whose $j^\text{th}$-row is the vector $(\iota_{j,2},\ldots,\iota_{j,|\mathcal{S}|})$, for $j\in [t]$. Observe that then $\iota_{[t],1} = 1$ and $\iota_{[t],s} = \prod_{j=1}^t M_{j,s}$, for $s\in [2:|\mathcal{S}|]$. The following lemma will be of use to us, in the proof of Proposition \ref{prop:l-div-worst}.
\begin{lemma}
	\label{lem:l-div-worst}
	For $t\geq 2$, let $t$ divide $L$. If $\ell>L\cdot\left(\frac{t-1}{t}\right)+1$, we have that $\ell_{[t]}>1$.
\end{lemma}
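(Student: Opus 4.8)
The plan is to complete the reduction already begun in the text and then close it with an elementary covering (union-bound) argument on the binary array $M$. Recall that in computing the minimum in \eqref{eq:l-div-worst-reduce} it suffices to take $w(\iota_j)=\ell$ for every $j\in[t]$ and, after the stated relabelling, $\iota_{j,1}=1$ for every $j$; consequently each row $M_{j,\cdot}$ of $M$ is a binary string of length $L$ of Hamming weight exactly $\ell-1$. Identifying the column index set of $M$ with $[L]$ (a relabelling of the coordinates $\{2,\ldots,|\mathcal{S}|\}$), and writing $\mathrm{supp}(M_{j,\cdot})\subseteq [L]$ for the support of the $j$th row, the relations $\iota_{[t],1}=1$ and $\iota_{[t],s}=\prod_{j=1}^t M_{j,s}$ give
\[
w(\iota_{[t]}) \;=\; 1 + \Bigl|\,\bigcap_{j=1}^t \mathrm{supp}(M_{j,\cdot})\,\Bigr|.
\]
So it is enough to show that for \emph{every} admissible $M$ the intersection $\bigcap_{j=1}^t \mathrm{supp}(M_{j,\cdot})$ is non-empty; then $w(\iota_{[t]})\ge 2$ for all configurations, and passing to the minimum in \eqref{eq:l-div-worst-reduce} yields $\ell_{[t]}\ge 2 > 1$.

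To prove the intersection is non-empty, I would pass to complements. For each $j\in[t]$ set $C_j := [L]\setminus \mathrm{supp}(M_{j,\cdot})$, so that $|C_j| = L-(\ell-1) = L-\ell+1$. A coordinate lies in $\bigcap_{j}\mathrm{supp}(M_{j,\cdot})$ precisely when it avoids $\bigcup_{j=1}^t C_j$, and by the union bound $\bigl|\bigcup_{j=1}^t C_j\bigr| \le \sum_{j=1}^t |C_j| = t(L-\ell+1)$. A one-line rearrangement shows that the lemma's hypothesis is exactly the condition making this bound strictly smaller than $L$:
\[
t(L-\ell+1) < L \iff tL - L + t < t\ell \iff \ell > \frac{L(t-1)+t}{t} = L\cdot\Bigl(\tfrac{t-1}{t}\Bigr)+1 .
\]
Hence $\bigcup_{j=1}^t C_j \subsetneq [L]$, so $\bigcap_{j=1}^t \mathrm{supp}(M_{j,\cdot})$ contains at least one coordinate, which is all we needed.

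I do not anticipate a substantive obstacle: the entire content is the union bound together with the arithmetic equivalence displayed above. The only points demanding care are bookkeeping ones — that the shared coordinate accounts for the additive $1$ in the identity for $w(\iota_{[t]})$, equivalently that the relevant rows of $M$ have weight $\ell-1$ rather than $\ell$ — and the remark that the divisibility hypothesis $t\mid L$ plays no role in this direction; it is retained only for uniformity with Proposition \ref{prop:l-div-worst}, where it is used to pin down the exact value of $\ell_{[t]}$ in the complementary regime $\ell > L\cdot\left(\tfrac{t-1}{t}\right)+1$.
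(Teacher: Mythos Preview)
Your proof is correct and is essentially the same as the paper's: the paper argues that if every column of $M$ contained a $0$ then the total number of $1$s would satisfy $(\ell-1)t\le L(t-1)$, contradicting the hypothesis, which is exactly your union-bound inequality $t(L-\ell+1)<L$ rewritten. Your observation that the divisibility assumption $t\mid L$ is unused here is also accurate.
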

\begin{proof}
	It suffices to show that if $\ell>L\cdot\left(\frac{t-1}{t}\right)+1$, the array $M$ cannot have a $0$ in every column. Indeed, $M$ having a $0$ in every column is equivalent to having $\iota_{[t]} = (1,0,0,\ldots,0)$, with $\ell_{[t]} = 1$. Now, note that there are $(\ell-1) t$ $1$s in total in the array $M$, and if there is at least one $0$ in every column of $M$, we must have that $(\ell-1) t\leq L(t-1)$, which is a contradiction.
\end{proof}
Towards proving the proposition, we define the sequences $\mathbf{x}_1,\ldots,$ $\mathbf{x}_t\in \{0,1\}^L$, each of Hamming weight $\ell-1$, such that 
\begin{equation}
	\label{eq:x}
	x_{j,(r)_L} = \begin{cases}
		1,\ \text{if $\frac{(j-1)L}{t}+1\leq r\leq \frac{(j-1)L}{t}+\ell-1$},\\
			0,\ \text{otherwise}.
	\end{cases}
\end{equation}

\begin{proof}[Proof of Proposition \ref{prop:l-div-worst}]
	It can easily be checked by setting $(\iota_{j,2},\ldots,\iota_{j,|\mathcal{S}|}) = \mathbf{x}_j$ as in \eqref{eq:x}, we obtain for $\ell\leq L\cdot \left(\frac{t-1}{t}\right)+1$ that $\ell_{[t]} = 1$. In combination with Lemma \ref{lem:l-div-worst}, it follows that $\ell_{[t]}>1$ if and only if $\ell>L\cdot \left(\frac{t-1}{t}\right)+1$.
	
	Now, consider the case when $\ell>L\cdot \left(\frac{t-1}{t}\right)+1$. From the definition of $\ell_{[t]}$ in \eqref{eq:l-div-worst-reduce}, the number of columns of $M$, all of whose entries equal $1$, is exactly $\ell_{[t]}-1$. Let $M_{\sim\iota_{[t]}}$ denote that submatrix of $M$ consisting of those columns $b\in [L]$ with $\iota_{[t],b+1} = 0$. Clearly, the total number of $0$s in $M_{\sim\iota_{[t]}}$ is $t(L-\ell+1)$. By arguing similar to the proof of Lemma \ref{lem:l-div-worst}, we must then have that 
	\[
	L-\ell_{[t]}+1\leq t(L-\ell+1),
	\]
	or, in other words, that
	\[
	\ell_{[t]}\geq  L+1-(L-\ell+1)t.
	\]
	The proof is then completed by checking that for $(\iota_{j,2},\ldots,\iota_{j,|\mathcal{S}|}) = \mathbf{x}_j$ as above, with $\iota_{j,1} = 1$, for each $j\in [t]$, we have $\ell_{[t]}= L+1-(L-\ell+1)t.$
\end{proof}
As a straightforward corollary of Proposition \ref{prop:l-div-worst}, we obtain the following result when $t=2$:
\begin{corollary}
	\label{cor:l-div-worst}
	Let $L$ be a multiple of $2$. Then, for any $\ell\leq [|\mathcal{S}|]$, we have that $\ell_{[2]} = 1$, if $\ell\leq L/2+1$. For $\ell>L/2+1$, we have 
$
	\ell_{[2]} = 2\ell-L-1.
$
\end{corollary}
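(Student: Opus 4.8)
The final statement is Corollary~\ref{cor:l-div-worst}, which is simply the $t=2$ specialization of Proposition~\ref{prop:l-div-worst}. My plan is therefore to derive it by direct substitution into the two cases of the proposition.

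First I would set $t=2$ in the proposition's threshold: $L\cdot\left(\frac{t-1}{t}\right)+1 = L\cdot\frac{1}{2}+1 = L/2+1$. Since $L$ is assumed to be a multiple of $2$, this threshold $L/2+1$ is an integer, so the hypothesis ``$t$ divides $L$'' is exactly the stated hypothesis ``$L$ is a multiple of $2$,'' and Proposition~\ref{prop:l-div-worst} applies verbatim. For $\ell \leq L/2+1$, the proposition gives $\ell_{[2]} = 1$ directly, which is the first claim.

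Next, for $\ell > L/2+1$, I would substitute $t=2$ into the proposition's formula $\ell_{[t]} = L+1-(L-\ell+1)t$, obtaining $\ell_{[2]} = L+1 - 2(L-\ell+1) = L+1 - 2L + 2\ell - 2 = 2\ell - L - 1$, which is the second claim. That completes the proof; it is pure bookkeeping.

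There is essentially no obstacle here: the only minor point to record is the compatibility of the integrality hypotheses (``$2 \mid L$'' versus ``$t \mid L$'' with $t=2$''), which is immediate. One might also optionally remark that the two cases agree at the boundary $\ell = L/2+1$, since $2(L/2+1) - L - 1 = 1 = \ell_{[2]}$, confirming continuity of the expression across the threshold, but this is not needed for the statement as given.

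\begin{proof}[Proof of Corollary~\ref{cor:l-div-worst}]
	This is the special case $t=2$ of Proposition~\ref{prop:l-div-worst}. Since $L$ is a multiple of $2$, the hypothesis ``$t$ divides $L$'' of the proposition holds with $t=2$, and the threshold $L\cdot\left(\frac{t-1}{t}\right)+1$ becomes $L/2+1$. Thus, by Proposition~\ref{prop:l-div-worst}, $\ell_{[2]} = 1$ whenever $\ell\leq L/2+1$, which is the first claim. For $\ell>L/2+1$, the proposition gives $\ell_{[2]} = L+1-(L-\ell+1)\cdot 2 = 2\ell-L-1$, as claimed. (One checks that the two cases agree at $\ell=L/2+1$, where $2\ell-L-1 = 1$.)
\end{proof}
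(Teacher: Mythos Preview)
Your proposal is correct and matches the paper's approach: the paper presents Corollary~\ref{cor:l-div-worst} as an immediate consequence of Proposition~\ref{prop:l-div-worst} with $t=2$ and gives no separate proof, so your direct substitution is exactly what is intended.
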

Via Corollary \ref{cor:l-div-worst}, we observe as in the setting with $\ell$-diversity that for even moderate values of $\ell$ (less than $L/2$), the post-linkage dataset satisfies only $1$-diversity, in the worst-case. Furthermore, from Proposition \ref{prop:l-div-worst}, we see that as the number $t$ of datasets being linked increases, the threshold $L\cdot \left(\frac{t-1}{t}\right)$ below which all values of $\ell$ result in only $1$-diversity upon linkage in the worst-case, increases. We hence understand that if the datasets $\mathcal{D}_1$ and $\mathcal{D}_2$ are each anonymized in a potentially sample-dependent (or instance-dependent) manner, then the \emph{worst-case} degradation in anonymity upon linkage can be quite significant.

In the next section, we propose the notion of (approximate) $(\ell,\delta)$-diversity, which unlike ``pure'' $\ell$-diversity, only requires that a dataset respect $\ell$-diversity \emph{with high probability}. We then describe a simple mechanism to achieve $(\ell,\delta)$-diversity in a \emph{sample-independent} manner, provided the data samples are independent and identically distributed (i.i.d.). We then argue that approximate diversity thus achieved also results in preservation of the anonymity parameter, with high probability, upon linkage, in sharp contrast to the setting of ``pure'' diversity.
\section{Linkage-Robustness via $(\ell,\delta)$-Diversity}
\label{sec:approx}

In this section, we consider the setting where the samples in datasets $\mathcal{D}_1,\ldots,\mathcal{D}_t$ are independent and identically distributed (i.i.d.), with the samples $(\mathbf{q}_i, s_i)$, $i\in [N]$, in each dataset drawn i.i.d. according to a joint distribution $P_{\mathbf{Q},S} = P_{\mathbf{Q}}\cdot P_{S|\mathbf{Q}}$ (the subscripts may be dropped if the distribution is clear from the context). Such a setting is natural in datasets of medical records, credit card spends, contributions to crowd-funded investments, and the like, where information is \emph{voluntarily} provided by users, thus ensuring independence across the records. We assume, without loss of generality, that $P_\mathbf{Q}(\mathbf{q})>0$, for all $\mathbf{q}\in \mathcal{Q}$, and $P_S(s) = \sum_{\mathbf{q}}P_{\mathbf{Q},S}(\mathbf{q},s)>0$, for all $s\in \mathcal{S}$. As a further useful piece of notation, we let $p_i$, $i\in [|\mathcal{S}|]$, denote the $i^\text{th}$-largest value in $\{P_S(s): s\in \mathcal{S}\}$.

To facilitate the analysis that follows, we first put down the definition of approximate $(\ell,\delta)$-diversity (similar in spirit to the definition in \cite[Def. 4]{prob-anon}); we emphasize that the \emph{definition} does not rely on the i.i.d. nature of the samples, although in the rest of the paper, we shall focus exclusively on this setting.

\begin{definition}[$(\ell,\delta)$-diversity]
	\label{def:l-delta}
	A dataset $\mathcal{D}$ obeys $(\ell,\delta)$-diversity, for some fixed $\delta\in (0,1)$, if it obeys Definition \ref{def:l-div}, with probability at least $1-\delta$, where the probability is calculated over the randomness in sampling $N$ records from the underlying distribution.
\end{definition}

\subsection{A Mechanism for Achieving $(\ell,\delta)$-Diversity}
We now work towards achieving $(\ell,\delta)$-diversity. We assume that there exists a \emph{central} agency (or anonymizer) (CA), for instance, the government, which possesses a high-accuracy estimate of the distribution $P_{\mathbf{Q},S}$, via estimates using datasets of the past. Furthermore, we assume that the CA has exact knowledge of $P_{\mathbf{Q},S}$; any error in its estimates in practice can be modelled by an (additive) error in total variational distance, which can be carried forward in every step of our analysis. 


In such a scenario, there exists a simple, \emph{dataset-independent} generalization strategy that can be conveyed by the CA to the dataset owners, resulting in anonymized datasets that respect approximate anonymity, for sufficiently large number of samples $N$; see Algorithm \ref{alg:l-delta-div} for details. We now elaborate on certain key steps in greater detail. In Step \ref{step:1} of Algorithm \ref{alg:l-delta-div}, the probability mass function $\{P(\overline{\mathbf{q}},s): \overline{\mathbf{q}}\in \overline{\mathcal{Q}}, s\in \mathcal{S}\}$, where $\overline{\mathcal{Q}}$ is the set of (potential) equivalence classes, obeys
$
P(\overline{\mathbf{q}},s) = \sum_{\mathbf{q}\in \overline{\mathbf{q}}} P(\mathbf{q},s),
$
for every $\overline{\mathbf{q}}\in \overline{\mathcal{Q}}, s\in \mathcal{S}$. Furthermore, in Step \ref{step:2}, we define $m:=\min\left\{|\mathcal{Q}|,\frac{1}{\ell p},\frac{\sum_{i=\ell}^{|\mathcal{S}|} p_i}{p}\right\}$. We now claim that there exists at least one choice of equivalence classes $\overline{\mathcal{Q}}$ that respects the constraint in Step 3.
\begin{lemma}
	\label{lem:constraint}
	For any $\ell \in [|\mathcal{S}|]$, for $p\in (0,p_\ell]$, there exists a collection $\overline{\mathcal{Q}}$ such that $|\{s: P(\overline{\mathbf{q}},s)\geq p\}|\geq \ell$, for each $\overline{\mathbf{q}} \in \overline{\mathcal{Q}}$.
\end{lemma}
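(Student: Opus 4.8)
The plan is to prove Lemma~\ref{lem:constraint} by exhibiting one explicit collection that satisfies the required property, rather than by a counting or probabilistic argument. The natural candidate is the coarsest partition of the quasi-identifier space, $\overline{\mathcal{Q}} = \{\mathcal{Q}\}$, which consists of a single equivalence class $\overline{\mathbf{q}} = \mathcal{Q}$ containing every $\mathbf{q} \in \mathcal{Q}$. The first step is to note that, for this class, $P(\overline{\mathbf{q}},s) = \sum_{\mathbf{q}\in\mathcal{Q}} P(\mathbf{q},s) = P_S(s)$ for every $s\in\mathcal{S}$, directly from the defining relation $P(\overline{\mathbf{q}},s)=\sum_{\mathbf{q}\in\overline{\mathbf{q}}}P(\mathbf{q},s)$ and the definition of the marginal $P_S$.

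The second step is to invoke the definition of $p_\ell$ as the $\ell^{\text{th}}$-largest value in $\{P_S(s):s\in\mathcal{S}\}$: there are at least $\ell$ sensitive attributes whose marginal probability is at least $p_\ell$ (namely the $\ell$ attributes attaining the $\ell$ largest marginals, with ties handled in the obvious way). Since $p\leq p_\ell$ by hypothesis, each such attribute $s$ satisfies $P(\overline{\mathbf{q}},s)=P_S(s)\geq p_\ell\geq p$. Hence $|\{s:P(\overline{\mathbf{q}},s)\geq p\}|\geq\ell$, and as $\overline{\mathcal{Q}}$ contains only this single class, the constraint in Step~3 holds for every member of $\overline{\mathcal{Q}}$, which is exactly the assertion of the lemma.

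I do not anticipate a genuine obstacle: the statement is a feasibility check ensuring that Algorithm~\ref{alg:l-delta-div} is well-posed, and the two observations above suffice. The only point deserving a sentence of care is the bookkeeping around ties in the definition of $p_\ell$, so that ``the $\ell^{\text{th}}$-largest value'' is read as ``at least $\ell$ attributes have marginal $\geq p_\ell$.'' I would also add a short remark that the range $p\leq p_\ell$ is the natural one: since $P(\overline{\mathbf{q}},s)=\sum_{\mathbf{q}\in\overline{\mathbf{q}}}P(\mathbf{q},s)\leq P_S(s)$ for any equivalence class, no choice of $\overline{\mathcal{Q}}$ can meet the constraint once $p$ exceeds $p_\ell$ (barring ties at $p_\ell$), so the coarsest partition is the extreme feasible case; the substantive question addressed later is not existence but selecting, among all feasible $\overline{\mathcal{Q}}$, one with as many equivalence classes as possible.
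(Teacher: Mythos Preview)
Your proposal is correct and is essentially identical to the paper's own proof: the paper also exhibits $\overline{\mathcal{Q}}=\{\mathcal{Q}\}$, notes that $P(\overline{\mathbf{q}},s)=P_S(s)$ for this single class, and concludes from $p\le p_\ell$ and the definition of $p_\ell$ that at least $\ell$ attributes satisfy the constraint. Your additional remarks on ties and on why $p\le p_\ell$ is the natural range are sound extras but not needed for the lemma itself.
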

\begin{proof}
	We claim that $\overline{\mathcal{Q}} = \{\mathcal{Q}\}$ satisfies the constraint in the lemma. Indeed, observe that for $\overline{\mathbf{q}} = \mathcal{Q}$, we have that $P(\overline{\mathbf{q}},s) = P(s)$, for all $s\in \mathcal{S}$. Since $p\leq p_\ell$, it follows that $|\{s: P(s)\geq p\}|\geq \ell$, for any $\ell \in [|\mathcal{S}|]$, by the definition of $p_\ell$.
\end{proof}

We mention that $p_\ell\leq 1/\ell$, for all $\ell \in [|\mathcal{S}|]$. Intuitively, a large number of equivalence classes gives rise to higher ``utility,'' or a smaller ``difference'' between the raw and anonymized datasets, for most reasonable measures of ``utility'' or ``difference'' between datasets. Later, in Section \ref{sec:utility}, we shall specify a generalization algorithm for constructing large numbers of equivalence classes in Step \ref{step:1} of Algorithm \ref{alg:l-delta-div}, for the natural setting where the equivalence classes are constrained to contain \emph{contiguous} quasi-identifiers, according to some fixed total order on the quasi-identifiers. Via a special case of this strategy, we shall also obtain a lower bound on $|\overline{\mathcal{Q}}|$ (see Lemma \ref{lem:lower}), for a broad class of distributions $P_{\mathbf{Q},S}$. 

\begin{algorithm}[t]
	\caption{Achieving $(\ell,\delta)$-diversity}
	\label{alg:l-delta-div}
	\begin{algorithmic}[1]	
		\Procedure{Mech-$(\ell,\delta)$-diversity}{}
		\State Clients convey $\ell\in (0,|\mathcal{S}|]$ and $\delta\in (0,1)$ to the CA.
		\State CA chooses $p\in (0,p_\ell]$ and constructs equivalence classes $\overline{\mathbf{q}}$ such that for each $\overline{\mathbf{q}}$, we have $|\{s: P(\overline{\mathbf{q}},s)\geq p\}|\geq \ell$. \label{step:1}
		\State CA broadcasts the integer {$N =\frac{\ln\left( \frac{m \ell}{\delta}\right)}{\ln\left(\frac{1}{1-p}\right)}$}  and the equivalence classes $\overline{\mathbf{q}}\in \overline{\mathcal{Q}}$ constructed, to the data owners. \label{step:2}
		\State Data owners wait until they accumulate $N$ samples and then construct the anonymized datasets $\overline{\mathcal{D}}_1,\ldots,\overline{\mathcal{D}}_t$.
		\State The datasets $\overline{\mathcal{D}}_1,\ldots,\overline{\mathcal{D}}_t$ are then conveyed to the clients.
		\EndProcedure	
	\end{algorithmic}
\end{algorithm} 

For any equivalence class $\overline{\mathbf{q}}$ obtained via Step~\ref{step:1} in Algorithm~\ref{alg:l-delta-div}, we define $\mathcal{S}_{\overline{\mathbf{q}}}:= \{s: P(\overline{\mathbf{q}},s)\geq p\}$. The following lemma will be useful to us in the proof of anonymity of the datasets returned by Algorithm \ref{alg:l-delta-div}.
\begin{lemma}
	\label{lem:help-hp-l-div}
	We have that $|\overline{\mathcal{Q}}|\leq m$.
\end{lemma}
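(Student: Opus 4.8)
The plan is to prove three separate upper bounds on $|\overline{\mathcal{Q}}|$, namely $|\overline{\mathcal{Q}}|\le |\mathcal{Q}|$, $|\overline{\mathcal{Q}}|\le \tfrac{1}{\ell p}$, and $|\overline{\mathcal{Q}}|\le \tfrac{1}{p}\sum_{i=\ell}^{|\mathcal{S}|}p_i$; since $m$ is by definition the minimum of these three quantities, the lemma follows immediately. Write $K:=|\overline{\mathcal{Q}}|$, and recall that the equivalence classes produced in Step~\ref{step:1} of Algorithm~\ref{alg:l-delta-div} are pairwise disjoint subsets of $\mathcal{Q}$, each of which is nonempty (the constraint $|\mathcal{S}_{\overline{\mathbf{q}}}|\ge \ell\ge 1$ forces $P(\overline{\mathbf{q}},\cdot)$ to place positive mass somewhere, so $\overline{\mathbf{q}}\neq\emptyset$). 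The first bound $K\le|\mathcal{Q}|$ is then immediate, as a finite set cannot contain more than $|\mathcal{Q}|$ pairwise-disjoint nonempty subsets. For the second bound, note that for every $\overline{\mathbf{q}}$ we have $\sum_{s\in\mathcal{S}}P(\overline{\mathbf{q}},s)\ge \sum_{s\in\mathcal{S}_{\overline{\mathbf{q}}}}P(\overline{\mathbf{q}},s)\ge \ell p$; summing over the disjoint classes gives $K\ell p\le \sum_{\overline{\mathbf{q}}\in\overline{\mathcal{Q}}}\sum_{s\in\mathcal{S}}P(\overline{\mathbf{q}},s)\le \sum_{\mathbf{q}\in\mathcal{Q}}P(\mathbf{q})=1$, hence $K\le \tfrac{1}{\ell p}$.

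The third bound is the only nontrivial step, and the heart of the proof is a double-counting identity combined with a carefully chosen split. For each $s\in\mathcal{S}$, let $d(s):=|\{\overline{\mathbf{q}}\in\overline{\mathcal{Q}}: P(\overline{\mathbf{q}},s)\ge p\}|$ denote the number of equivalence classes in which $s$ is ``heavy.'' On the one hand, $\sum_{s\in\mathcal{S}}d(s)=\sum_{\overline{\mathbf{q}}\in\overline{\mathcal{Q}}}|\mathcal{S}_{\overline{\mathbf{q}}}|\ge K\ell$, using the Step~\ref{step:1} constraint. On the other hand, two bounds hold for each $d(s)$: trivially $d(s)\le K$, and also $P(s)\ge \sum_{\overline{\mathbf{q}}}P(\overline{\mathbf{q}},s)\ge \sum_{\overline{\mathbf{q}}:\,P(\overline{\mathbf{q}},s)\ge p}P(\overline{\mathbf{q}},s)\ge d(s)\,p$, i.e. $d(s)\le P(s)/p$. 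Now index $\mathcal{S}$ so that $P(s_i)=p_i$ for $i\in[|\mathcal{S}|]$, and estimate $\sum_{s}d(s)=\sum_{i=1}^{|\mathcal{S}|}d(s_i)$ by applying $d(s_i)\le K$ to the $\ell-1$ largest-marginal attributes and $d(s_i)\le p_i/p$ to the remaining ones: this gives $\sum_{i=1}^{|\mathcal{S}|}d(s_i)\le (\ell-1)K+\tfrac{1}{p}\sum_{i=\ell}^{|\mathcal{S}|}p_i$. Chaining this with $\sum_s d(s)\ge K\ell$ yields $K\ell\le (\ell-1)K+\tfrac{1}{p}\sum_{i=\ell}^{|\mathcal{S}|}p_i$, i.e. $K\le \tfrac{1}{p}\sum_{i=\ell}^{|\mathcal{S}|}p_i$. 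Taking the minimum of the three bounds gives $|\overline{\mathcal{Q}}|\le m$, as claimed.

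The step I expect to require the most care is precisely the split in the final estimate: charging every $d(s_i)$ against $p_i/p$ merely reproduces the weaker bound $K\le \tfrac{1}{\ell p}$ already obtained, while charging everything against $K$ gives nothing useful, so one must spend the crude bound $d(s)\le K$ exactly on the $\ell-1$ attributes that could conceivably be heavy in (almost) every equivalence class, and reserve the sharper bound $d(s)\le P(s)/p$ for the remaining $|\mathcal{S}|-\ell+1$ attributes, whose marginals sum to exactly $\sum_{i=\ell}^{|\mathcal{S}|}p_i$. Once this allocation is fixed, the rest is routine bookkeeping.
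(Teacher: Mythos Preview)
Your proof is correct. The first two bounds are handled exactly as in the paper. For the third bound $|\overline{\mathcal{Q}}|\le \frac{1}{p}\sum_{i=\ell}^{|\mathcal{S}|}p_i$, however, you take a genuinely different route. The paper's argument is a one-line pigeonhole: since each $\mathcal{S}_{\overline{\mathbf{q}}}$ has at least $\ell$ elements, its largest-index member $s_{\overline{\mathbf{q}}}$ must lie in $\{\ell,\ldots,|\mathcal{S}|\}$; then
\[
\sum_{i=\ell}^{|\mathcal{S}|}p_i \;\ge\; \sum_{\overline{\mathbf{q}}\in\overline{\mathcal{Q}}}\sum_{i=\ell}^{|\mathcal{S}|}P(\overline{\mathbf{q}},i)\;\ge\;\sum_{\overline{\mathbf{q}}\in\overline{\mathcal{Q}}}P(\overline{\mathbf{q}},s_{\overline{\mathbf{q}}})\;\ge\;|\overline{\mathcal{Q}}|\cdot p.
\]
You instead set up the full double count $\sum_s d(s)=\sum_{\overline{\mathbf{q}}}|\mathcal{S}_{\overline{\mathbf{q}}}|\ge K\ell$ and then split, charging $d(s_i)\le K$ to the top $\ell-1$ attributes and $d(s_i)\le p_i/p$ to the rest. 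Both arguments are valid; the paper's is shorter because it selects a single witness attribute per class rather than tracking all of $\mathcal{S}_{\overline{\mathbf{q}}}$, while yours makes the underlying combinatorics (and why exactly $\ell-1$ attributes get the crude bound) more explicit.
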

\begin{proof}
It is clear that $|\overline{\mathcal{Q}}|\leq	|\mathcal{Q}|$. Towards showing that $|\overline{\mathcal{Q}}|\leq\frac{1}{\ell p}$, we observe that
	\begin{align*}
		1&=\sum_{\overline{\mathbf{q}}\in \overline{\mathcal{Q}}} \sum_{s \in \mathcal{S}}P(\overline{\mathbf{q}},s)\\
		&\geq \sum_{\overline{\mathbf{q}}\in \overline{\mathcal{Q}}} \sum_{s \in \mathcal{S}_{\overline{\mathbf{q}}}}P(\overline{\mathbf{q}},s)\\ 
		&\geq \sum_{\overline{\mathbf{q}}\in \overline{\mathcal{Q}}} \ell p = |\overline{\mathcal{Q}}|\cdot\ell p.
	\end{align*}
Next, we prove that $|\overline{\mathcal{Q}}|\leq \frac{\sum_{i=\ell}^{|\mathcal{S}|} p_i}{p}$. Without loss of generality, we assume that $p_i = P_S(i)$, $i\in [|\mathcal{S}|]$. For any fixed $\overline{\mathbf{q}}\in \overline{\mathcal{Q}}$, let $s_{\overline{\mathbf{q}}}$ be the largest index $i\in [|\mathcal{S}|]$ such that $s_i\in S_{\overline{\mathbf{q}}}$. Since $|S_{\overline{\mathbf{q}}}|\geq \ell$, we must have that $s_{\overline{\mathbf{q}}}\in \{\ell,\ell+1,\ldots,|\mathcal{S}|\}$. Now, observe that
	\begin{align*}
	\sum_{i=\ell}^{|\mathcal{S}|} p_i&=\sum_{i=\ell}^{|\mathcal{S}|}\sum_{\overline{\mathbf{q}}\in \overline{\mathcal{Q}}} P(\overline{\mathbf{q}},i)\\
	&= \sum_{\overline{\mathbf{q}}\in \overline{\mathcal{Q}}} \sum_{i=\ell}^{|\mathcal{S}|} P(\overline{\mathbf{q}},i)\\
	&\geq \sum_{\overline{\mathbf{q}}\in \overline{\mathcal{Q}}} P(\overline{\mathbf{q}},s_{\overline{\mathbf{q}}}) \geq  \sum_{\overline{\mathbf{q}}\in \overline{\mathcal{Q}}} p = |\overline{\mathcal{Q}}|\cdot p,
\end{align*}
thus proving that $|\overline{\mathcal{Q}}|\leq \frac{\sum_{i=\ell}^{|\mathcal{S}|} p_i}{p}$.
\end{proof}
\begin{remark}
	We mention that neither $\frac{1}{\ell p}$ nor $\frac{\sum_{i=\ell}^{|\mathcal{S}|} p_i}{p}$ is uniformly smaller than the other, for all marginal distributions $P_S$ and parameters $p\leq p_\ell$. To see why, consider the following two settings with $|\mathcal{S}| = 3$: (i) $p_1 = p_2 = p_3 = \frac{1}{3}$, and (ii) $p_1 = 0.99$, $p_2 = 0.009$, and $p_3 = 0.001$. In the first setting, observe that when $\ell = 2$, we have $\frac{1}{2p} = \frac{1}{\ell p}\leq \frac{2}{3p} = \frac{\sum_{i=\ell}^{|\mathcal{S}|} p_i}{p}$. In the second setting, note that when $\ell = 3$, we have $\frac{0.001}{p} = \frac{\sum_{i=\ell}^{|\mathcal{S}|} p_i}{p} < \frac{1}{\ell p} = \frac{1}{3p}$.
\end{remark}
The following theorem then shows that any dataset $\overline{\mathcal{D}}$ obtained via Algorithm \ref{alg:l-delta-div} obeys $(\ell,\delta)$-diversity. But first, we revisit some notation. For any equivalence class $\overline{\mathbf{q}}$ obtained via Algorithm~\ref{alg:l-delta-div}, let $\mathcal{S}_{\overline{\mathbf{q}}}\big \vert_{\ell}$ denote the $\ell$ sensitive attributes $s$ in $\mathcal{S}_{\overline{\mathbf{q}}}$ with the largest value of $P(\overline{\mathbf{q}},s)$. Furthermore, let $N(\overline{\mathbf{q}},s)$ denote the number of occurrences of records with quasi-identifier $\mathbf{q}$ that lies in $\overline{\mathbf{q}}$, and have sensitive attribute $s\in \mathcal{S}$.

\begin{theorem}
	\label{lem:hp-l-div}
	Any dataset $\overline{\mathcal{D}}$ that is anonymized via {\normalfont\textsc{Mech-$(\ell,\delta)$-Diversity}} satisfies $(\ell,\delta)$-diversity.
\end{theorem}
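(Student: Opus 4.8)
The plan is a direct union-bound (coupon-collector-style) failure argument over equivalence classes and sensitive attributes. Fix the collection $\overline{\mathcal{Q}}$ of equivalence classes and the threshold $p\in(0,p_\ell]$ chosen by the CA in Step~\ref{step:1}; by construction (Lemma~\ref{lem:constraint}) we have $|\mathcal{S}_{\overline{\mathbf{q}}}|\geq\ell$ for every $\overline{\mathbf{q}}\in\overline{\mathcal{Q}}$, so $\mathcal{S}_{\overline{\mathbf{q}}}\vert_{\ell}$ is well defined and $P(\overline{\mathbf{q}},s)\geq p$ for every $s\in\mathcal{S}_{\overline{\mathbf{q}}}\vert_{\ell}$. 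Since the $N$ i.i.d.\ records of $\overline{\mathcal{D}}$ are binned into $\overline{\mathcal{Q}}$, and the equivalence classes partition $\mathcal{Q}$, the count $N(\overline{\mathbf{q}},s)$ is a $\mathrm{Binomial}\bigl(N,P(\overline{\mathbf{q}},s)\bigr)$ random variable, so $\Pr[N(\overline{\mathbf{q}},s)=0]=(1-P(\overline{\mathbf{q}},s))^N$.

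First I would reduce the event of interest to a failure event that is easy to bound. The only way $\overline{\mathcal{D}}$ can fail Definition~\ref{def:l-div} is for some \emph{non-empty} equivalence class $\overline{\mathbf{q}}$ (i.e.\ with $\sum_{s}N(\overline{\mathbf{q}},s)>0$) to contain fewer than $\ell$ distinct sensitive attributes. Hence it suffices to control the ``bad event''
\[
B:=\bigcup_{\overline{\mathbf{q}}\in\overline{\mathcal{Q}}}\ \bigcup_{s\in\mathcal{S}_{\overline{\mathbf{q}}}\vert_{\ell}}\{N(\overline{\mathbf{q}},s)=0\},
\]
because on the complement $B^{c}$ every $\overline{\mathbf{q}}$ — in particular every non-empty one — has all $\ell$ attributes of $\mathcal{S}_{\overline{\mathbf{q}}}\vert_{\ell}$ realized, hence at least $\ell$ distinct sensitive attributes, so $\overline{\mathcal{D}}$ obeys $\ell$-diversity. (Empty equivalence classes impose no constraint in Definition~\ref{def:l-div}, which is the one subtlety to state carefully here.)

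Next I would apply the union bound, which needs no independence across pairs $(\overline{\mathbf{q}},s)$: using $P(\overline{\mathbf{q}},s)\geq p$ for $s\in\mathcal{S}_{\overline{\mathbf{q}}}\vert_{\ell}$,
\[
\Pr[B]\;\leq\;\sum_{\overline{\mathbf{q}}\in\overline{\mathcal{Q}}}\sum_{s\in\mathcal{S}_{\overline{\mathbf{q}}}\vert_{\ell}}(1-P(\overline{\mathbf{q}},s))^{N}\;\leq\;|\overline{\mathcal{Q}}|\cdot\ell\cdot(1-p)^{N}\;\leq\;m\ell(1-p)^{N},
\]
where the last inequality is exactly Lemma~\ref{lem:help-hp-l-div}. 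Finally, substituting $N=\ln(m\ell/\delta)/\ln(1/(1-p))$ from Step~\ref{step:2} (rounded up to an integer, which only decreases $\Pr[B]$) gives $(1-p)^{N}=\exp\bigl(-N\ln(1/(1-p))\bigr)=\exp\bigl(-\ln(m\ell/\delta)\bigr)=\delta/(m\ell)$, whence $\Pr[B]\leq\delta$ and $\overline{\mathcal{D}}$ obeys Definition~\ref{def:l-div} with probability at least $1-\delta$, i.e.\ satisfies $(\ell,\delta)$-diversity per Definition~\ref{def:l-delta}. Since all of $\overline{\mathcal{D}}_1,\dots,\overline{\mathcal{D}}_t$ are produced the same way, this proves the claim for each. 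I expect the only genuinely delicate step to be the reduction in the second paragraph — correctly accounting for the vacuous constraint on unrealized equivalence classes so that it really does suffice to guarantee the $\ell$ highest-probability attributes of each class are hit; the rest is a routine union bound combined with the cardinality bound already established in Lemma~\ref{lem:help-hp-l-div}.
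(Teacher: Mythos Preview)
Your proposal is correct and follows essentially the same approach as the paper: bound $\Pr[N(\overline{\mathbf{q}},s)=0]\leq(1-p)^N$ for each $s\in\mathcal{S}_{\overline{\mathbf{q}}}\vert_\ell$, take a union bound over the $\ell$ attributes and then over $|\overline{\mathcal{Q}}|\leq m$ equivalence classes (via Lemma~\ref{lem:help-hp-l-div}), and plug in $N$. Your explicit handling of empty equivalence classes and the integer rounding of $N$ are minor clarifications the paper omits, but the argument is otherwise identical.
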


\begin{proof}
Fix an equivalence class $\overline{\mathbf{q}}$ obtained via {\normalfont\textsc{Mech-$(\ell,\delta)$-Diversity}} (see Algorithm \ref{alg:l-delta-div}). Now, for any $s\in \mathcal{S}_{\overline{\mathbf{q}}}$, we have via the i.i.d. nature of the $N$ samples that
\begin{equation}
	\label{eq:l-div-1}
\Pr\left[N(\overline{\mathbf{q}},s)= 0\right] \leq \left(1-\min_{s\in \mathcal{S}_{\overline{\mathbf{q}}}} P(\overline{\mathbf{q}},s)\right)^N\leq (1-p)^N.
\end{equation}
Let $M_{\overline{\mathbf{q}}}$ be the random variable denoting the number of distinct sensitive attributes in the equivalence class $\overline{\mathbf{q}}$. Clearly,
\begin{align*}
	\Pr\left[M_{\overline{\mathbf{q}}}\geq \ell\right]&\geq \Pr[N(\overline{\mathbf{q}},s) >0,\text{ for all $s\in \mathcal{S}_{\overline{\mathbf{q}}}\big \vert_\ell$}]\\
	&\geq 1-\sum_{s\in \mathcal{S}_{\overline{\mathbf{q}}}\big \vert_\ell} (1-p)^N\\
	&\geq 1-\ell\cdot (1-p)^N,
\end{align*}
where the second inequality follows from \eqref{eq:l-div-1} and an application of a union bound. Finally, via a union bound again, we have
\begin{align*}
	\Pr\left[M_{\overline{\mathbf{q}}}\geq \ell, \text{ for all $\overline{\mathbf{q}}\in \overline{\mathcal{Q}}$}\right]&\geq 1-\ell|\overline{\mathcal{Q}}|\cdot (1-p)^N\\
	&\geq 1-m\ell\cdot (1-p)^N = 1 - \delta,
\end{align*}
where the second inequality above follows from Lemma \ref{lem:help-hp-l-div}. Setting $N =\frac{\ln\left( \frac{m \ell}{\delta}\right)}{\ln\left(\frac{1}{1-p}\right)}$ then gives us the statement of the theorem.
\end{proof}


\begin{remark}
	The choice of the (hyper-)parameter $p$ plays an important role in Algorithm \ref{alg:l-delta-div}.
    For a given value of $\ell$, a relatively small (resp. large) value $p$ implies a potentially high (resp. low) value of $|\overline{\mathcal{Q}}|$ via Step 3 in Algorithm \ref{alg:l-delta-div}. At the same time, a small (resp. large) value of $p$ requires relatively higher (resp. lower) number of samples to be collected by data owners toward ensuring $(\ell, \delta)$-diversity.
    \end{remark}

In the next section, we discuss explicit bounds on the degradation of $(\ell,\delta)$-diversity upon the linkage of datasets.
\subsection{Degradation of $(\ell,\delta)$-Diversity Upon Linkage}
While Theorem \ref{lem:hp-l-div} demonstrates that the datasets $\mathcal{D}_1,\ldots,$ $\mathcal{D}_t$ can be individually anonymized with the aid of the CA, the question that remains to be addressed is the degradation of anonymity upon linkage. The following simple ``composition theorem'' shows that linkage, when carried out after {\normalfont\textsc{Mech-$(\ell,\delta)$-Diversity}} of Algorithm~\ref{alg:l-delta-div}   roughly preserves the strength of the anonymity parameter, at the possible expense of the probability of anonymity failure. We say that a collection of datasets obeys $(\ell,\delta)$-diversity, if the event that \emph{any} dataset in the collection does not respect Definition \ref{def:l-delta} has probability at most $\delta$.
\begin{theorem}
	\label{lem:linkage-l-div}
	The overall post-linkage dataset $\overline{\mathcal{D}}_{[t]}$, defined as $\overline{\mathcal{D}}_{[t]} := \bigcup_{\mathbf{q} \in \mathcal{Q}} \overline{\mathcal{D}}^{(\mathbf{q})}_{[t]} $, where the individual anonymized datasets $\overline{\mathcal{D}}_i$, $1 \le i \le t$, are obtained via {\normalfont\textsc{Mech-$(\ell,\delta)$-Diversity}}, satisfies $(\ell,t\delta)$-diversity.
\end{theorem}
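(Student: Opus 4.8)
The plan is to exploit the key structural feature of \textsc{Mech-$(\ell,\delta)$-Diversity}: in Step~\ref{step:2} of Algorithm~\ref{alg:l-delta-div} the CA broadcasts a \emph{single} collection $\overline{\mathcal{Q}}$ of equivalence classes, which is then used by every data owner. Hence all of $\overline{\mathcal{D}}_1,\ldots,\overline{\mathcal{D}}_t$ are partitioned according to the same $\overline{\mathcal{Q}}$, and for any quasi-identifier $\mathbf{q}$ the classes $\overline{\mathbf{q}}_1,\ldots,\overline{\mathbf{q}}_t$ located by the adversary all coincide with the unique $\overline{\mathbf{q}}\in\overline{\mathcal{Q}}$ containing $\mathbf{q}$; in particular $\bigcap_{j=1}^{t}\overline{\mathbf{q}}_j=\overline{\mathbf{q}}$. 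First I would record the consequence that $\overline{\mathcal{D}}_{[t]}$ is itself an anonymized dataset whose equivalence classes are exactly those of $\overline{\mathcal{Q}}$, with the count of sensitive attribute $s$ in class $\overline{\mathbf{q}}$ equal to $\min_{j\in[t]}N_j(\overline{\mathbf{q}},s)$, where $N_j(\overline{\mathbf{q}},s)$ denotes the count of $s$ within $\overline{\mathbf{q}}$ in $\overline{\mathcal{D}}_j$.

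With this reduction in hand, the remainder is the proof of Theorem~\ref{lem:hp-l-div} with one extra union bound. Fix $\overline{\mathbf{q}}\in\overline{\mathcal{Q}}$ and $s\in\mathcal{S}_{\overline{\mathbf{q}}}$. Applying \eqref{eq:l-div-1} to each dataset separately gives $\Pr[N_j(\overline{\mathbf{q}},s)=0]\leq(1-p)^N$ for every $j\in[t]$, so a union bound over the $t$ datasets yields $\Pr[\min_{j\in[t]}N_j(\overline{\mathbf{q}},s)=0]\leq t(1-p)^N$; that is, $s$ fails to survive into the post-linkage class $\overline{\mathbf{q}}$ with probability at most $t(1-p)^N$. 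A union bound over the $\ell$ attributes in $\mathcal{S}_{\overline{\mathbf{q}}}\big \vert_\ell$ then shows that $\overline{\mathbf{q}}$ retains at least $\ell$ distinct sensitive attributes in $\overline{\mathcal{D}}_{[t]}$ with probability at least $1-\ell t(1-p)^N$, and a final union bound over $\overline{\mathbf{q}}\in\overline{\mathcal{Q}}$ together with $|\overline{\mathcal{Q}}|\leq m$ (Lemma~\ref{lem:help-hp-l-div}) gives
\[
\Pr\big[\overline{\mathcal{D}}_{[t]}\text{ fails }\ell\text{-diversity}\big]\leq m\ell t(1-p)^N.
\]
Substituting $N=\ln(m\ell/\delta)/\ln(1/(1-p))$, so that $(1-p)^N=\delta/(m\ell)$, bounds the right-hand side by $t\delta$, which is precisely the claim.

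I do not anticipate a real obstacle; the result is intentionally a ``composition theorem'' in the differential-privacy style, and the only delicate point is the observation in the first paragraph that linkage is harmless precisely because the \emph{same} generalization is applied to every dataset. Had the $\overline{\mathcal{D}}_j$ instead been anonymized with distinct, instance-dependent equivalence classes (the regime of Section~\ref{sec:worst-case}), $\bigcap_{j=1}^{t}\overline{\mathbf{q}}_j$ could be far smaller than any single $\overline{\mathbf{q}}_j$ and no such bound would hold. It is also worth noting that independence \emph{across} the $t$ datasets is not needed: the within-dataset i.i.d.\ assumption supplies \eqref{eq:l-div-1}, and every remaining step is a union bound; invoking the assumption that at least one user is common to all datasets would only add a deterministically surviving sensitive attribute and is not required.
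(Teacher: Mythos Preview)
Your proposal is correct and follows essentially the same approach as the paper: both exploit the fact that \textsc{Mech-$(\ell,\delta)$-Diversity} broadcasts a \emph{common} set of equivalence classes $\overline{\mathcal{Q}}$ to all data owners, so that the post-linkage classes coincide with $\overline{\mathcal{Q}}$, and then obtain the $t\delta$ bound by inserting one additional union bound (over the $t$ datasets) into the proof of Theorem~\ref{lem:hp-l-div}. The paper applies the union bound at the coarser dataset level (each $\overline{\mathcal{D}}_j$ satisfies the ``good'' event of Theorem~\ref{lem:hp-l-div} with probability at least $1-\delta$, hence all $t$ do with probability at least $1-t\delta$), whereas you thread the factor $t$ through the per-attribute, per-class bounds; the arithmetic is identical, and your side remarks on independence across datasets and the common-user assumption are accurate but not needed for the statement.
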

\begin{proof}
	
	Note via Algorithm \ref{alg:l-delta-div} that if quasi-identifiers $\mathbf{q}_1$ and $\mathbf{q}_2$ belong to the same equivalence class, then $\overline{\mathcal{D}}^{(\mathbf{q}_1)}_{[t]} = \overline{\mathcal{D}}^{(\mathbf{q}_2)}_{[t]}$. On the other hand, if $\mathbf{q}_1$ and $\mathbf{q}_2$ do not belong to the same equivalence class, then $\overline{\mathcal{D}}^{(\mathbf{q}_1)}_{[t]} \bigcap \overline{\mathcal{D}}^{(\mathbf{q}_2)}_{[t]} = \emptyset$. For any quasi-identifier $\mathbf{q}$, let $M^{({\mathbf{q}})}_{[t]}$ be the number of distinct sensitive attributes appearing in the post-linkage dataset $\overline{\mathcal{D}}^{(\mathbf{q})}_{[t]}$. It then follows from the proof of Theorem \ref{lem:hp-l-div} and a union bound that
	\[
	\Pr\left[M^{({\mathbf{q}})}_{[t]}\geq \ell,\ \text{for all $\mathbf{q}$} \right]\geq 1-t\delta,
	\]
	thereby yielding the statement of the theorem.
	%
\end{proof}
Via Theorem \ref{lem:linkage-l-div}, we see that {upon linkage of datasets anonymized via the procedure in Algorithm~\ref{alg:l-delta-div}}, the $(\ell,\delta)$-diversity degrades only (linearly) in the probability of anonymity failure, for small $\delta$.

The discussion until now has focussed on mechanisms for achieving (approximate) $(\ell,\delta)$-diversity and an explicit characterization of the degradation of  $(\ell,\delta)$-diversity upon the linkage of datasets, provided that each dataset is independently anonymized via \textsc{Mech-$(\ell,\delta)$-diversity}. As alluded to earlier, a pertinent requirement of the anonymized datasets returned by \textsc{Mech-$(\ell,\delta)$-diversity} is that they give rise to a high ``utility," which, in turn, can be ensured by requiring that the number $|\overline{\mathcal{Q}}|$ of equivalence classes constructed be as large as possible. In the next section, we work towards achieving this objective.
\subsection{Generalization Strategies for High Utility}
\label{sec:utility}
In what follows, we assume that there exists a fixed total order `$<$' on the quasi identifiers $\mathbf{q}\in \mathcal{Q}$, giving rise to $\mathbf{q}_1<\mathbf{q}_2<\ldots<\mathbf{q}_{|\mathcal{Q}|}$.\footnote{Such a total order can be imposed in most settings involving sets of finitely many elements: for instance, when quasi-identifiers are drawn from the set \textsc{Postal-Code}$\times$\textsc{Gender}$\times$\textsc{Age} (see Section \ref{sec:numerics}), one can order the quasi-identifiers according to a natural, lexicographic order.} A natural constraint on the equivalence classes constructed via a generalization strategy is that they comprise \emph{contiguous} quasi-identifiers, with respect to this total order -- we call such a strategy a ``contiguous generalization algorithm". In this section, we first present a contiguous generalization algorithm that can be used in Step \ref{step:1} of Algorithm~\ref{alg:l-delta-div}. Our algorithm, which we call \textsc{Greedy-Generalize}, is shown as Algorithm \ref{alg:greedy}, and is executed by the CA. In words, \textsc{Greedy-Generalize} greedily adds available quasi-identifiers to the current equivalence class $\overline{\mathbf{q}}$, until the constraint that $|S_{\overline{\mathbf{q}}}|\geq \ell$ is satisfied. Note that $R$ denotes the number of equivalence classes $(\overline{\mathbf{q}}_k)_{k\geq 1}$ returned by Algorithm \ref{alg:greedy}. In what follows, we use the term ``admissible generalization algorithm" to refer to any generalization algorithm that returns equivalence classes $\widehat{\overline{\mathbf{q}}}$, each of which satisfies $|S_{\widehat{\overline{\mathbf{q}}}}|\geq \ell$.

\begin{algorithm}[t]
	\caption{A practical contiguous generalization algorithm}
	\label{alg:greedy}
	\begin{algorithmic}[1]	
		\Procedure{Greedy-Generalize}{}
		\State Set $i,j\gets 1$ and $\overline{\mathbf{q}}_1\gets \emptyset$.
		\While{$i<|\mathcal{Q}|$}
		\While{$|S_{\overline{\mathbf{q}}_j}|< \ell$}
		\State Update $\overline{\mathbf{q}}_j\gets \overline{\mathbf{q}}_j\cup \mathbf{q}_i$ and $i\gets i+1$.
		\If{$i = |\mathcal{Q}|+1$} \textbf{break}
		\EndIf
		\EndWhile
		\State Update $j\gets j+1$.
		\EndWhile
		\If{$|S_{\overline{\mathbf{q}}_j}|<\ell$}
		\State Perform $\overline{\mathbf{q}}_{j-1}\gets \overline{\mathbf{q}}_{j-1}\cup \overline{\mathbf{q}}_j$ and set $R\gets j-1$.
		\Else \State Set $R\gets j$.
		\EndIf
		\State Return $(\overline{\mathbf{q}}_k:\ k\in [R])$.
		\EndProcedure	
	\end{algorithmic}
\end{algorithm} 

We now claim that \textsc{Greedy-Generalize} is optimal within the class of contiguous, admissible generalization algorithms, i.e., \textsc{Greedy-Generalize} returns the largest number of equivalence classes among all such algorithms.

\begin{proposition}
	\label{prop:greedy}
	We have that $R\geq R'$, where $R'$ is the number of equivalence classes returned by any admissible, contiguous generalization algorithm.
\end{proposition}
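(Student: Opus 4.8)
The plan is to prove optimality of \textsc{Greedy-Generalize} via a standard ``greedy stays ahead'' exchange argument, comparing the boundaries of the equivalence classes produced by \textsc{Greedy-Generalize} against those of an arbitrary admissible, contiguous generalization algorithm. Since both algorithms produce equivalence classes consisting of contiguous blocks of quasi-identifiers with respect to the fixed total order $\mathbf{q}_1<\cdots<\mathbf{q}_{|\mathcal{Q}|}$, each algorithm is completely described by an increasing sequence of ``cut points'' $0 = c_0 < c_1 < \cdots < c_R = |\mathcal{Q}|$, where the $k$-th equivalence class is $\{\mathbf{q}_{c_{k-1}+1},\ldots,\mathbf{q}_{c_k}\}$. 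Write $g_0=0<g_1<g_2<\cdots$ for the cut points of \textsc{Greedy-Generalize} and $c_0=0<c_1<c_2<\cdots<c_{R'}=|\mathcal{Q}|$ for those of the competing admissible algorithm.

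The key claim, proved by induction on $k$, is that $g_k \le c_k$ for every $k \le R'$ (as long as $g_k$ is defined). The base case $g_0 = c_0 = 0$ is immediate. For the inductive step, suppose $g_{k-1}\le c_{k-1}$. The greedy algorithm, starting from position $g_{k-1}+1$, adds quasi-identifiers one at a time and stops at the \emph{first} index $g_k$ at which the accumulated set $\overline{\mathbf{q}}=\{\mathbf{q}_{g_{k-1}+1},\ldots,\mathbf{q}_{g_k}\}$ satisfies $|S_{\overline{\mathbf{q}}}|\ge \ell$; that is, $g_k$ is the smallest index $b$ with $|S_{\{\mathbf{q}_{g_{k-1}+1},\ldots,\mathbf{q}_b\}}|\ge\ell$. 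Now the competing algorithm's $k$-th class is $\{\mathbf{q}_{c_{k-1}+1},\ldots,\mathbf{q}_{c_k}\}$, and admissibility forces $|S_{\{\mathbf{q}_{c_{k-1}+1},\ldots,\mathbf{q}_{c_k}\}}|\ge\ell$. The crucial monotonicity observation is that the set of distinct sensitive attributes appearing among $\{\mathbf{q}_a,\ldots,\mathbf{q}_b\}$ is monotone non-decreasing under extending the interval on either end; in particular, since $g_{k-1}\le c_{k-1}$, we have $\{\mathbf{q}_{c_{k-1}+1},\ldots,\mathbf{q}_{c_k}\}\subseteq\{\mathbf{q}_{g_{k-1}+1},\ldots,\mathbf{q}_{c_k}\}$, hence $|S_{\{\mathbf{q}_{g_{k-1}+1},\ldots,\mathbf{q}_{c_k}\}}|\ge |S_{\{\mathbf{q}_{c_{k-1}+1},\ldots,\mathbf{q}_{c_k}\}}|\ge\ell$. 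Thus $c_k$ is one of the indices $b$ at which the greedy accumulator starting from $g_{k-1}+1$ has already reached $\ell$ distinct attributes, and since $g_k$ is the \emph{smallest} such index, $g_k\le c_k$, completing the induction.

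Finally I would convert the ``stays ahead'' claim into the count inequality $R\ge R'$. Since $c_{R'}=|\mathcal{Q}|$ and $g_{R'}\le c_{R'}=|\mathcal{Q}|$, the greedy algorithm has covered all quasi-identifiers within its first $R'$ cuts, so it produces at most $R'$ ``full'' classes in the \textbf{while} loop; but one must be slightly careful about the final cleanup step of Algorithm \ref{alg:greedy}, where a trailing block with fewer than $\ell$ distinct sensitive attributes is merged into the previous class. I would argue that this only ever \emph{removes} one class relative to the raw loop count, and that the competing admissible algorithm, which must itself cover all of $\mathcal{Q}$ with every class satisfying the $\ell$-constraint, cannot do better: formally, if $g_{R'}<|\mathcal{Q}|$ then greedy produces strictly more than $R'$ classes (giving $R>R'$ even after the possible merge decrements the count by one, provided one checks the boundary arithmetic), and if $g_{R'}=|\mathcal{Q}|$ then the induction shows greedy's cuts coincide with or refine the competitor's, again yielding $R\ge R'$. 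The main obstacle I anticipate is precisely this boundary bookkeeping around the terminal merge step — ensuring the off-by-one cases in Algorithm \ref{alg:greedy}'s final \textbf{if}/\textbf{else} do not erode the inequality — rather than the exchange argument itself, which is routine once the set-monotonicity observation is in hand.
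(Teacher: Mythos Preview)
Your proposal is correct and follows essentially the same ``greedy stays ahead'' induction on block endpoints as the paper's own proof; the monotonicity of $|S_{\overline{\mathbf{q}}}|$ under interval extension that you isolate is precisely what drives the paper's contradiction argument in the inductive step. Your terminal-merge worry is easily dispatched once you note that the induction already certifies that greedy's first $R'$ classes each satisfy the $\ell$-constraint (since each $g_k$ is by construction the \emph{first} index at which the constraint holds, and the inductive step exhibits $c_k$ as a witness that such an index exists), so any merge can only touch a class indexed strictly beyond $R'$, whence $R\ge R'$ follows immediately---your informal line ``at most $R'$ full classes'' has the inequality reversed, but your formal case split gets it right.
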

\begin{proof}
	Fix any admissible, contiguous generalization algorithm that returns equivalence classes $\widehat{\overline{\mathbf{q}}}_1,\ldots,\widehat{\overline{\mathbf{q}}}_{R'}$. With some abuse of notation, we treat the ordered quasi-identifiers $(\overline{\mathbf{q}}_j)_{j\leq R}$ (resp. $(\widehat{\overline{\mathbf{q}}}_j)_{j\leq R'}$) as the ordered integers in $[R]$ (resp. $[R']$). For any $k\in [R]$ (resp. $k'\in R'$), let $\mathbf{q}^{(k)}_b$ and $\mathbf{q}^{(k)}_e$ (resp. $\widehat{\mathbf{q}}^{(k')}_b$ and $\widehat{\mathbf{q}}^{(k')}_e$) denote the smallest and largest quasi-identifiers in $\overline{\mathbf{q}}_k$ (resp. $\widehat{\overline{\mathbf{q}}}_{k'}$), respectively. Note that we have $\overline{\mathbf{q}}_b^{(j+1)} = \overline{\mathbf{q}}_e^{(j)}+1$, for all $j\in [R-1]$, and likewise, $\widehat{\overline{\mathbf{q}}}_b^{(j+1)} = \widehat{\overline{\mathbf{q}}}_e^{(j)}+1$, for all $j\in [R'-1]$.
    
    Now, let $t$ be any positive integer such that $\widehat{\mathbf{q}}^{(t)}_e \geq \mathbf{q}^{(t)}_e$. In order to prove the proposition, it suffices then to show that $\widehat{\mathbf{q}}^{(t+1)}_e \geq \mathbf{q}^{(t+1)}_e$. Indeed, it then follows by induction that $\widehat{\mathbf{q}}^{(k)}_e \geq \mathbf{q}^{(k)}_e$, for all $t+1\leq k\leq R'$, thereby implying that $R'\leq R$.
	
	We now prove the above claim. Observe that it suffices to consider the setting where $\mathbf{q}^{(t+1)}_e> \widehat{\mathbf{q}}^{(t+1)}_b$, as otherwise, the claim is trivially satisfied. In such a setting, we claim that it is not possible to have $\widehat{\mathbf{q}}^{(t+1)}_e <\mathbf{q}^{(t+1)}_e$. Indeed, then, we would have that $\{\widehat{\mathbf{q}}^{(t+1)}_b,\ldots,\widehat{\mathbf{q}}^{(t+1)}_e\}\subset \{\mathbf{q}^{(t+1)}_b,\ldots,\mathbf{q}^{(t+1)}_e\}$, which in turn implies that we must have $\mathbf{q}^{(t+1)}_e \leq \widehat{\mathbf{q}}^{(t+1)}_e < \mathbf{q}^{(t+1)}_e$, where the first inequality follows from the definition of $\mathbf{q}^{(t+1)}_e$ via \textsc{Greedy-Generalize}. This results in a contradiction.
\end{proof}
We briefly comment on the time complexity of \textsc{Greedy-Generalize} and hence of Step \ref{step:1} in Algorithm \ref{alg:l-delta-div}. Observe that Algorithm \ref{alg:greedy} conducts a single linear sweep over the $|\mathcal{Q}|$ quasi-identifiers, at each stage performing at most $|\mathcal{S}|$ computations. The overall time complexity of Algorithm \ref{alg:greedy} (for a fixed $p\in (0,p_\ell]$) is hence $O(|\mathcal{Q}||\mathcal{S}|)$, which is efficient for practical implementations. Furthermore, we mention that in Step \ref{step:1} of Algorithm \ref{alg:l-delta-div}, it suffices to vary $p$ over \emph{finitely many} values $p'$ of the form $p'=P(\overline{\mathbf{q}},s)$, for some $\overline{\mathbf{q}}\subseteq \mathcal{Q}$ and $s\in \mathcal{S}$. 

Next, we derive an explicit lower bound on the number $R$ of equivalence classes constructed by \textsc{Greedy-Generalize}, for the case when the random variables $\mathbf{Q}$ and $S$, representing the quasi-identifiers and sensitive attributes, respectively, are independent. Let $\theta:= |\mathcal{Q}|\cdot \min_{\mathbf{q}\in \mathcal{Q}} P_{\mathbf{Q}}(\mathbf{q})$. We derive our lower bound via a contiguous generalization strategy called \textsc{Generalize-Ind}, where all equivalence classes are of the same size. The algorithm \textsc{Generalize-Ind} is shown as Algorithm \ref{alg:generalize-ind}. The following lemma then holds.
\begin{algorithm}[t]
	\caption{Constructing equivalence classes for $P_{\mathbf{Q},S} = P_{\mathbf{Q}}\cdot P_S$}
	\label{alg:generalize-ind}
	\begin{algorithmic}[1]	

		\Procedure{Generalize-Ind}{}
		\State CA fixes $T:= \frac{|\mathcal{Q}|}{\left \lfloor \theta p_\ell/p\right \rfloor}$.
            \State CA constructs the $j^\text{th}$-equivalance class $\overline{\mathbf{q}}_j$, for $j\in [|\mathcal{Q}|/T]$, as:
\begin{equation}
\label{eq:gen-ind}
    \overline{\mathbf{q}}_j = \{\mathbf{q}_{(j-1)T+1},\ldots, \mathbf{q}_{jT}\}.
\end{equation}
	\EndProcedure	
	\end{algorithmic}
\end{algorithm} 


\begin{lemma}
	\label{lem:lower}
	Any equivalence class $\overline{\mathbf{q}}_j$, $j\in [|\mathcal{Q}|/T]$, constructed via \textsc{Generalize-Ind}, obeys $|S_{\overline{\mathbf{q}}_j}|\geq \ell$, 
	when $P_{\mathbf{Q},S} = P_\mathbf{Q}\cdot P_S$. Furthermore, the number of equivalence classes $|\overline{\mathcal{Q}}|$ constructed obeys $|\overline{\mathcal{Q}}|\geq \left \lfloor \frac{\theta p_\ell}{p}\right \rfloor$.
\end{lemma}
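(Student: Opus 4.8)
The plan is to reduce both assertions to a single lower bound on the $P_{\mathbf{Q}}$-mass of each block produced by \textsc{Generalize-Ind}, and then exploit the product structure $P_{\mathbf{Q},S} = P_{\mathbf{Q}}\cdot P_S$. First I would observe that each equivalence class $\overline{\mathbf{q}}_j$ defined in \eqref{eq:gen-ind} consists of exactly $T$ consecutive quasi-identifiers, each having probability at least $\min_{\mathbf{q}\in\mathcal{Q}}P_{\mathbf{Q}}(\mathbf{q}) = \theta/|\mathcal{Q}|$ by the definition of $\theta$. Hence
\[
P_{\mathbf{Q}}(\overline{\mathbf{q}}_j) \;=\; \sum_{\mathbf{q}\in\overline{\mathbf{q}}_j} P_{\mathbf{Q}}(\mathbf{q}) \;\geq\; T\cdot\frac{\theta}{|\mathcal{Q}|} \;=\; \frac{\theta}{\left\lfloor \theta p_\ell/p\right\rfloor} \;\geq\; \frac{\theta}{\theta p_\ell/p} \;=\; \frac{p}{p_\ell},
\]
where we used $T = |\mathcal{Q}|/\lfloor\theta p_\ell/p\rfloor$ from Algorithm~\ref{alg:generalize-ind} and $\lfloor x\rfloor\leq x$. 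This mass estimate is the crux of the argument.

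Next I would invoke independence: for every $\overline{\mathbf{q}}_j$ and every $s\in\mathcal{S}$ we have $P(\overline{\mathbf{q}}_j,s) = P_{\mathbf{Q}}(\overline{\mathbf{q}}_j)\cdot P_S(s)$. Therefore, for any sensitive attribute $s$ with $P_S(s)\geq p_\ell$, the bound just derived gives $P(\overline{\mathbf{q}}_j,s) \geq (p/p_\ell)\cdot p_\ell = p$, so that $s\in S_{\overline{\mathbf{q}}_j}$ by the defining condition $S_{\overline{\mathbf{q}}_j} = \{s: P(\overline{\mathbf{q}}_j,s)\geq p\}$. Since $p_\ell$ is by definition the $\ell$-th largest value in $\{P_S(s):s\in\mathcal{S}\}$, there are at least $\ell$ attributes $s$ with $P_S(s)\geq p_\ell$; hence $|S_{\overline{\mathbf{q}}_j}|\geq\ell$ for every $j$, which is the first claim (and in particular shows \textsc{Generalize-Ind} is an admissible, contiguous generalization algorithm, so it may legitimately be compared against \textsc{Greedy-Generalize} via Proposition~\ref{prop:greedy}). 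For the second claim, \textsc{Generalize-Ind} splits $\mathcal{Q}$ into $|\mathcal{Q}|/T = \lfloor\theta p_\ell/p\rfloor$ contiguous blocks, so $|\overline{\mathcal{Q}}| = \lfloor\theta p_\ell/p\rfloor \geq \lfloor\theta p_\ell/p\rfloor$.

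The computations above are routine; the only point needing care is the bookkeeping around $T$, which I expect to be the main (minor) obstacle. I would make explicit that $p$ is taken in the regime where $\left\lfloor\theta p_\ell/p\right\rfloor$ is a positive integer not exceeding $|\mathcal{Q}|$ --- i.e., $p$ is neither too large nor too small relative to $\theta p_\ell$ --- which is precisely when \textsc{Generalize-Ind} is invoked, so that $T\geq 1$ and the blocks are well-defined and nonempty. In keeping with the divisibility conventions used elsewhere in the paper, I would also assume $\lfloor\theta p_\ell/p\rfloor$ divides $|\mathcal{Q}|$, so that each block has exactly $T$ quasi-identifiers and the blocks partition $\mathcal{Q}$; otherwise one rounds $T$ up to $\lceil T\rceil$ (which only strengthens the per-block mass bound) and absorbs the remainder into the last block, changing $|\overline{\mathcal{Q}}|$ by at most an additive constant.
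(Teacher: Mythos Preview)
Your proposal is correct and follows essentially the same approach as the paper: both arguments lower-bound the $P_{\mathbf{Q}}$-mass of each block by $T\cdot\theta/|\mathcal{Q}|=\theta/\lfloor\theta p_\ell/p\rfloor$, invoke the product form $P(\overline{\mathbf{q}}_j,s)=P_{\mathbf{Q}}(\overline{\mathbf{q}}_j)P_S(s)$, and conclude that the $\ell$ sensitive attributes with $P_S(s)\geq p_\ell$ all land in $S_{\overline{\mathbf{q}}_j}$. The only cosmetic difference is that the paper carries $p_i$ through the sum before applying $\lfloor x\rfloor\leq x$ in a single final inequality, whereas you first isolate $P_{\mathbf{Q}}(\overline{\mathbf{q}}_j)\geq p/p_\ell$ and then multiply; your added remarks on divisibility of $|\mathcal{Q}|$ by $\lfloor\theta p_\ell/p\rfloor$ are a welcome clarification the paper leaves implicit.
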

\begin{proof}
	The lower bound on the number $|\overline{\mathcal{Q}}|$ of equivalence classes follows immediately from \eqref{eq:gen-ind}. Towards showing that $|\{s: P(\overline{\mathbf{q}},s)\geq p\}|\geq \ell$ holds for all equivalence classes $\overline{\mathbf{q}}_j$, we shall argue that for any $j\in [|\mathcal{Q}|/T]$, we have that all sensitive attributes $i\in [\ell]$ lie in $S_{\overline{\mathbf{q}}_j}$. Indeed, note that for all $i\in [\ell]$,
	\begin{align*}
		P(\overline{\mathbf{q}}_j,i)&= \sum_{r = (j-1)T+1}^{jT} P_{\mathbf{Q},S}(\mathbf{q}_r,i)\\
		&= \sum_{r = (j-1)T+1}^{jT} P_{\mathbf{Q}}(\mathbf{q}_r)P_S(i)\\
		&\geq \sum_{r = (j-1)T+1}^{jT} \frac{\theta}{|\mathcal{Q}|}\cdot P_S(i)\\
		&= \frac{\theta T}{|\mathcal{Q}|}\cdot p_i= \frac{\theta}{\left \lfloor \frac{\theta p_\ell}{p}\right \rfloor} \cdot p_i \geq p,
	\end{align*}
thus showing that $i\in S_{\overline{\mathbf{q}}_j}$, and hence that $|S_{\overline{\mathbf{q}}_j}|\geq \ell$.
\end{proof}
It hence follows that when $P_{\mathbf{Q},S} = P_\mathbf{Q}\cdot P_S$, we have $R\geq \left \lfloor \frac{\theta p_\ell}{p}\right \rfloor$.
\begin{remark}
	\label{remark:1}
	For the special case when $P_\mathbf{Q} = \frac{1}{|\mathcal{Q}|}\cdot \mathbf{1}$, i.e., when the marginal distribution on quasi-identifiers is uniform, we have that the equivalence classes constructed via \textsc{Generalize-Ind} obey $|\overline{\mathcal{Q}}|\geq \left \lfloor \frac{p_\ell}{p}\right \rfloor$. 
\end{remark}
Lemma \ref{lem:lower} yields as a corollary the following explicit characterization of the optimal number of equivalence classes returned by \emph{any} admissible generalization algorithm, for the special case when $P_\mathbf{Q}$ corresponds to a uniform distribution on quasi-identifiers. 
\begin{corollary}
	\label{cor:lower}
	When $P_{\mathbf{Q},S} = P_{\mathbf{Q}}\cdot P_S$ with $P_{\mathbf{Q}} = \frac{1}{|\mathcal{Q}|}\cdot \mathbf{1}$, for any $\ell\in [|\mathcal{S}|]$,  \textsc{Generalize-Ind} returns the largest number of equivalence classes among all admissible generalization algorithms. Furthermore, the number of equivalence classes constructed is $|\overline{\mathcal{Q}}|= \left \lfloor \frac{p_\ell}{p}\right \rfloor$.
\end{corollary}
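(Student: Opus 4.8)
The plan is to pair the achievability bound from Lemma~\ref{lem:lower} and Remark~\ref{remark:1} with a matching upper bound on $|\overline{\mathcal{Q}}|$ valid for \emph{every} admissible generalization algorithm. First I would use the assumed structure: when $P_{\mathbf{Q}} = \frac{1}{|\mathcal{Q}|}\cdot\mathbf{1}$ and $P_{\mathbf{Q},S} = P_{\mathbf{Q}}\cdot P_S$, every equivalence class $\overline{\mathbf{q}}\subseteq\mathcal{Q}$ satisfies $P(\overline{\mathbf{q}},s) = \frac{|\overline{\mathbf{q}}|}{|\mathcal{Q}|}\,P_S(s)$ for all $s\in\mathcal{S}$. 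Hence the admissibility condition $|\{s: P(\overline{\mathbf{q}},s)\geq p\}|\geq\ell$ is equivalent to requiring that at least $\ell$ values of $P_S$ are at least $\frac{|\mathcal{Q}| p}{|\overline{\mathbf{q}}|}$, which by the definition of $p_\ell$ holds exactly when $p_\ell\geq\frac{|\mathcal{Q}| p}{|\overline{\mathbf{q}}|}$, i.e., when $|\overline{\mathbf{q}}|\geq\frac{|\mathcal{Q}| p}{p_\ell}$; since $|\overline{\mathbf{q}}|\in\mathbb{N}$ this forces $|\overline{\mathbf{q}}|\geq\big\lceil\frac{|\mathcal{Q}| p}{p_\ell}\big\rceil$.

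Next, since distinct equivalence classes are disjoint and together cover $\mathcal{Q}$, any admissible algorithm must satisfy $|\overline{\mathcal{Q}}|\cdot\big\lceil\frac{|\mathcal{Q}| p}{p_\ell}\big\rceil\leq\sum_{\overline{\mathbf{q}}\in\overline{\mathcal{Q}}}|\overline{\mathbf{q}}| = |\mathcal{Q}|$, so $|\overline{\mathcal{Q}}|\leq\big\lfloor|\mathcal{Q}|/\lceil|\mathcal{Q}| p/p_\ell\rceil\big\rfloor$. The last quantity is at most $\lfloor p_\ell/p\rfloor$ by the elementary fact that $\big\lfloor n/\lceil n/x\rceil\big\rfloor\leq\lfloor x\rfloor$ for all $n\in\mathbb{N}$ and real $x\geq 1$: with $k:=\lceil n/x\rceil$ one has $k\geq n/x$, hence $x\geq n/k$, and therefore $\lfloor x\rfloor\geq\lfloor n/k\rfloor$. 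Taking $n = |\mathcal{Q}|$ and $x = p_\ell/p\geq 1$ (which is legitimate since $p\in(0,p_\ell]$) yields $|\overline{\mathcal{Q}}|\leq\lfloor p_\ell/p\rfloor$ for every admissible generalization algorithm, contiguous or not.

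To finish I would observe that \textsc{Generalize-Ind} attains this bound: in the uniform case $\theta = |\mathcal{Q}|\cdot\min_{\mathbf{q}}P_{\mathbf{Q}}(\mathbf{q}) = 1$, so $T = |\mathcal{Q}|/\lfloor p_\ell/p\rfloor$ and the algorithm returns exactly $|\mathcal{Q}|/T = \lfloor p_\ell/p\rfloor$ equivalence classes, each admissible by Lemma~\ref{lem:lower}; combined with the upper bound this gives both the stated count and optimality. The main obstacle is the reduction carried out in the first paragraph rather than the arithmetic: one must verify that the admissibility constraint genuinely collapses to a pure size condition on $|\overline{\mathbf{q}}|$ (this uses both the uniform marginal and the product form crucially), and that the counting bound never secretly invokes contiguity, since the corollary asserts optimality against \emph{all} admissible algorithms. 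A minor point worth flagging is the implicit divisibility $\lfloor p_\ell/p\rfloor\mid|\mathcal{Q}|$ baked into the definition of $T$; when it fails one enlarges the final equivalence class, which can only help, leaving the conclusion $|\overline{\mathcal{Q}}| = \lfloor p_\ell/p\rfloor$ intact.
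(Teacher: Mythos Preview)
Your proposal is correct and follows essentially the same route as the paper: both reduce admissibility under the uniform--product assumption to a pure size constraint $|\overline{\mathbf{q}}|\geq |\mathcal{Q}|p/p_\ell$, then combine disjointness with the lower bound from Lemma~\ref{lem:lower}/Remark~\ref{remark:1}. The only difference is cosmetic: the paper skips the ceiling step and directly infers $|\overline{\mathcal{Q}}|\leq p_\ell/p$ from the real-valued size bound (then takes the floor since $|\overline{\mathcal{Q}}|$ is an integer), whereas you round to $\lceil |\mathcal{Q}|p/p_\ell\rceil$ first and invoke the auxiliary inequality $\lfloor n/\lceil n/x\rceil\rfloor\leq\lfloor x\rfloor$; your detour is harmless but unnecessary.
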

\begin{proof}
	Via Remark \ref{remark:1}, it suffices to show that any admissible generalization algorithm  returns at most $\frac{p_\ell}{p}$ equivalence classes; it then follows that the lower bound in Lemma \ref{lem:lower} is tight, and is achieved by the \textsc{Generalize-Ind} algorithm. To this end, suppose that for such a generalization algorithm, we have for some equivalence class $\widehat{\overline{\mathbf{q}}}$ it constructs, that $|\widehat{\overline{\mathbf{q}}}|<\frac{|\mathcal{Q}|}{p_\ell/{p}}$. Then, it follows that for any $i\in \{\ell,\ldots,|\mathcal{S}|\}$, we have
	\begin{align*}
		P(\widehat{\overline{\mathbf{q}}},i)&= \sum_{\mathbf{q}\in \widehat{\overline{\mathbf{q}}}} P(\mathbf{q},i)\\
		&= \sum_{\mathbf{q}\in \widehat{\overline{\mathbf{q}}}} \frac{1}{|\mathcal{Q}|}\cdot P_S(i)\\
		&\leq \frac{|\widehat{\overline{\mathbf{q}}}|}{|\mathcal{Q}|}\cdot p_\ell < p,
	\end{align*}
by the assumption that $|\widehat{\overline{\mathbf{q}}}|<\frac{|\mathcal{Q}|}{p_\ell/{p}}$. It then follows that for any such equivalence class $\widehat{\overline{\mathbf{q}}}$, we have $|S_{\widehat{\overline{\mathbf{q}}}}|\leq \ell-1$, which is a contradiction. Thus, every equivalence class $\widehat{\overline{\mathbf{q}}}$ must satisfy $|\widehat{\overline{\mathbf{q}}}|\geq\frac{|\mathcal{Q}|}{p_\ell/{p}}$, implying that the number of equivalence classes is at most $\frac{p_\ell}{p}$.
\end{proof}

\section{Numerical Results}
\label{sec:numerics}
We now present some numerical results pertaining to the number of samples required in Algorithm \ref{alg:l-delta-div}, as functions of the parameters $p$ (that is suitably parameterized with respect with $\ell$) and $\delta$, respectively, in settings of practical interest.

We consider the setting of medical records in hospitals, where the quasi-identifiers are drawn from the set \textsc{Postal-Code}$\times$\textsc{Gender}$\times$\textsc{Age} (with the \textsc{Postal-Code} attribute/set is the collection of neighbourhoods that patients live in) and the sensitive attribute is drawn from the set $\textsc{Disease}$. We choose $|\textsc{Gender}| = 3$, $|\textsc{Age}| = 100$, and $|\textsc{Postal-Code}| = 10$, thereby giving rise to $|\mathcal{Q}| = 3000$, and set $|\textsc{Disease}| = 50$. It is instructive to consider different marginal distributions $P_S$ of interest, which in turn determines the interval $(0,p_\ell]$ that the parameter $p$ lies in. In what follows, for any fixed $\ell\in [|\mathcal{S}|]$, we let $p = \beta \cdot p_\ell$, for some $\beta\in (0,1]$ that is independent of $\ell$. We consider the following marginal distributions $P_S$ of sensitive attributes:
\begin{enumerate}
	\item Uniform marginals: Here, we let $P_S(s) = 1/|\mathcal{S}|$, for all $s\in [|\mathcal{S}|]$; observe that $p_i = 1/|\mathcal{S}|$, for all $i\in [|\mathcal{S}|]$. This results in a ``balanced'' dataset, with respect to the sensitive attributes---a setting of interest in machine learning applications.
	\item Geometric marginals: Here, we let $P_S(s) = p_1 \rho^{s-1}$, for suitable $p_1,\rho\in (0,1)$; observe that $p_i = P_S(i)$, for all $i\in [|\mathcal{S}|]$.  In our experiments, we set $\rho = 0.95$, giving rise to $p_1\approx 0.054$.
\end{enumerate}


Figures \ref{fig:1} and \ref{fig:2} consider the setting of uniform marginals. In Figure \ref{fig:1}, we consider the case when $\beta$ is one of $0.01$ or $0.02$, leading to $p$ being one of $2\times 10^{-4}$ or $4\times 10^{-4}$, respectively, since $p = \beta/|\mathcal{S}|$. For each of these values of $\beta$ (equivalently, $p$), we consider $\ell = 10$ and $\ell = 30$, and plot $N$ as a function of $\delta\in (0,1)$. Figure \ref{fig:2} plots $N$ as a function of $p$, for $\delta$ being one of $0.01$ or $0.001$, and for $\ell$ being one of $\ell = 10$ or $\ell = 30$. Here, our interest is those values of $p$ that are at most $p_\ell = 1/|\mathcal{S}|$, and we let $p$ vary from $10^{-3}$ to $1/|\mathcal{S}|$. Observe that in both settings, we have $\frac{1}{\ell p}   \leq \min\left\{|\mathcal{Q}|,\frac{\sum_{i=\ell}^{|\mathcal{S}|} p_i}{p}\right\} = \min\left\{|\mathcal{Q}|,\frac{|\mathcal{S}|-\ell+1}{|\mathcal{S}|p}\right\}$, resulting in $m\ell = 1/p$ being independent of $\ell$; this fact is reflected in the plots too, in that the plots of the number of samples, for fixed $\beta$ (in Figure \ref{fig:1}) or fixed $\delta$ (in Figure \ref{fig:2}), do not vary with $\ell$. Besides, as expected, the number of samples required decreases as $\delta$ increases, with smaller values of $p$ (intuitively corresponding to larger number of equivalence classes; thereby providing higher ``utility'') requiring more samples for $(\ell,\delta)$-diversity to be satisfied.

\begin{figure}
	\centering
	\subfloat[]{\includegraphics[width = 0.5\textwidth]{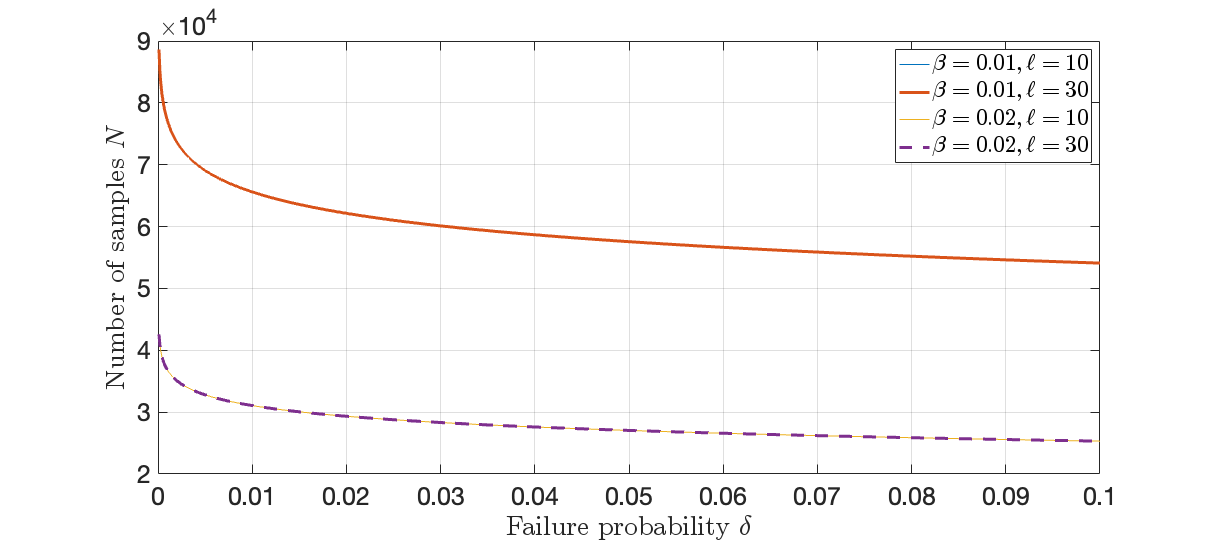} \label{fig:1}}
	\vskip1em
	\subfloat[]{\includegraphics[width = 0.5\textwidth]{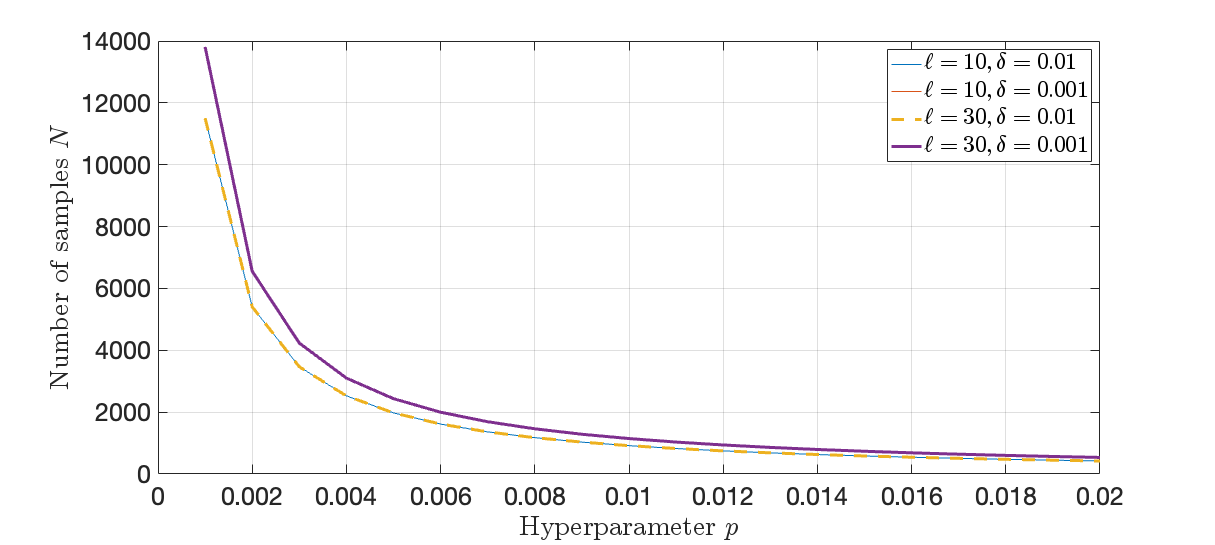} \label{fig:2}}
	\caption{Plots showing the variation of the number of samples $N$ with the failure probability $\delta$ and the diversity parameter $\ell$.}
\end{figure}

\begin{figure}
	\centering
	\subfloat[]{\includegraphics[width = 0.5\textwidth]{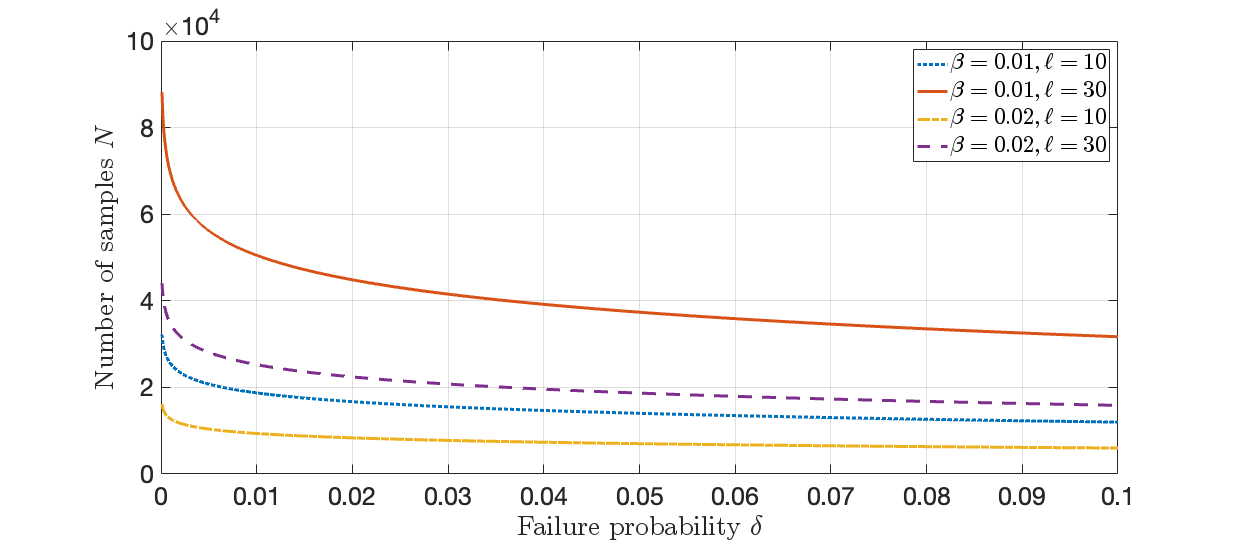} \label{fig:3}}
	\vskip1em
	\subfloat[]{\includegraphics[width = 0.5\textwidth]{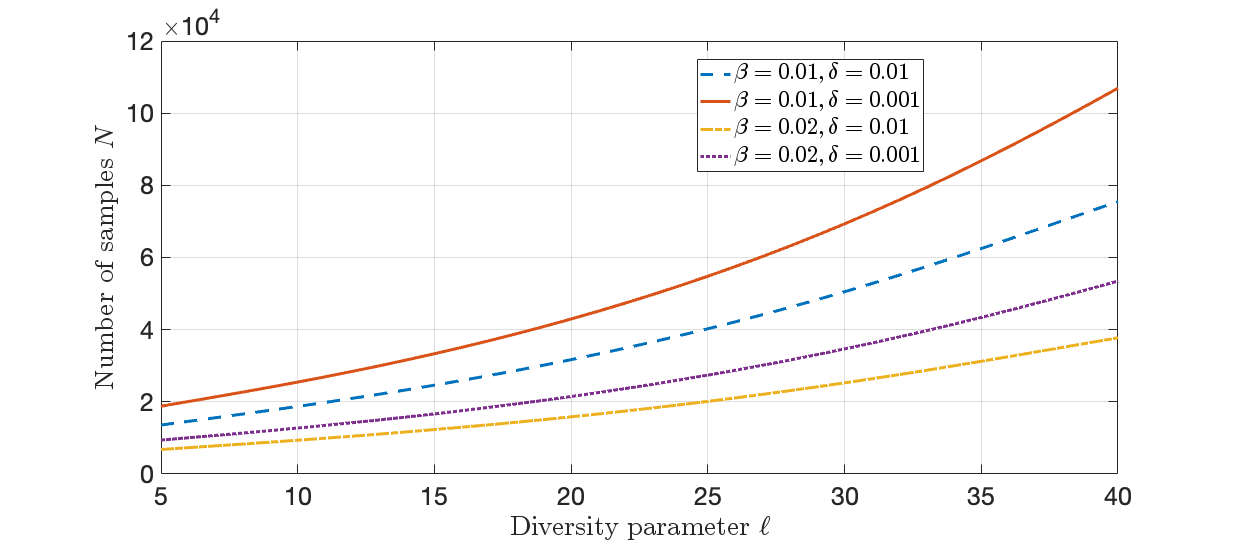} \label{fig:4}}
	\caption{Plots showing the variation of the number of samples $N$ with the failure probability $\delta$ and the diversity parameter $\ell$.}
\end{figure}

Figures  \ref{fig:3} and \ref{fig:4} consider the setting of geometric marginals. In both plots, we consider the case when $\beta$ is one of $0.01$ or $0.02$. For each of these values of $\beta$, Figure \ref{fig:3} considers $\ell = 10$ and $\ell = 30$, and plots $N$ as a function of $\delta\in (0,1)$. Furthermore, for each of the above values of $\beta$, Figure \ref{fig:4} chooses $\delta$ to be one of $0.01$ or $0.001$, and plots $N$ as a function of $\ell\in [5,40]$, by plugging in $p = \beta\cdot p_1 \rho^{\ell-1}$ in the expression for $N$. Again, we note that $N$ decreases with $\delta$, with larger values of $\ell$ requiring more samples for $(\ell,\delta)$-diversity to be satisfied.

We end this section with a couple of remarks. Firstly, we expect that it should be possible to argue that even in the setting where each dataset owner carries out a \emph{dataset-dependent} generalization of quasi-identifiers to achieve ``pure'' anonymity (i.e., $\delta = 0$), the post-linkage dataset will respect the same notion of anonymity, with the anonymity parameter post linkage sufficiently close to that before linkage, \emph{with high probability}, for a large enough number of i.i.d. samples. The intuition behind such a conjecture is that for sufficiently many i.i.d. samples, the fraction of  samples with a given quasi-identifier $\mathbf{q}$ and sensitive attribute $s$ will be close to $P(\mathbf{q},s)$. We leave the design and analysis of such a mechanism for achieving $(\ell,\delta)$-diversity for future work.

Secondly, we mention that it is possible to directly extend the definition of $(\ell,\delta)$-diversity to other notions of anonymity such as $k$-anonymity and entropy $\ell$-diversity \cite{l-div}. In the context of preserving anonymity upon linkages, it is of interest to derive bounds on the sizes of datasets for achieving entropy $\ell$-diversity, via a mechanism similar to Algorithm \ref{alg:l-delta-div}, and analyze the degradation of anonymity upon linkages, in a vein similar to Theorem \ref{lem:linkage-l-div}.

%
\section{Conclusions}
\label{sec:conclusion}
In this paper, we revisited the basic problem of ensuring $\ell$-diversity in the release of entire datasets to clients, for statistical analyses. We considered the setting where an adversary has access to multiple anonymized datasets and wishes to determine the sensitive attribute of a user, whose quasi-identifiers are known, via ``linkage''. We argued that even when each dataset is anonymized to respect ``pure'' $\ell$-diversity, the adersary can, in the worst-case, still exactly extract the sensitive attribute of interest, for even moderately large $\ell$ values. Towards alleviating this issue of anonymity degradation, we introduced the notion of $(\ell,\delta)$-diversity, which only requires that $\ell$-diversity be satisfied with high probability. We then quantified the degradation of $(\ell,\delta)$-diversity under linkage, for datasets of i.i.d. samples, via a simple composition theorem. Finally, we presented an algorithm for maximizing the number of equivalence classes constructed by our algorithm, in the presence of the constraint that each equivalence class contain contiguous quasi-identifiers, and discussed lower bounds on this number, for special sample distributions.

An interesting line of future work will be to extend our analyses to the setting of datasets with correlated samples (or sample distributions with memory, such as the case of voter records) and to other notions of anonymity. Another pertinent question that can be studied is the design of a mechanism for achieving such ``approximate'' anonymity in the absence of a central agency/anonymizer.

\bibliographystyle{ACM-Reference-Format}
\bibliography{references}


\end{document}